\documentclass[prx,reprint,groupedaddress,nofootinbib,longbibliography,superscriptaddress]{revtex4-2}
\usepackage{amsmath,amssymb,graphicx,amsmath,amssymb,amsthm,bm,bbm,mathtools,amsfonts,array,empheq,colortbl}
\usepackage{xcolor}
\usepackage{phfqit}
\usepackage[normalem]{ulem}
\usepackage[unicode=true,pdfusetitle, bookmarks=true,bookmarksnumbered=false,bookmarksopen=false, breaklinks=false,pdfborder={0 0 0},backref=false,colorlinks=true, linkcolor=blue,citecolor=blue,urlcolor=blue]{hyperref}

\usepackage[caption=false]{subfig}
\usepackage{tikz}
\usetikzlibrary{positioning, shadings,calc}
\usetikzlibrary{arrows.meta, shapes.misc}

\definecolor{c4}{HTML}{2A52BE}
\definecolor{c1}{HTML}{de8001 }
\definecolor{c3}{HTML}{03b1d0}
\definecolor{c2}{HTML}{0087BD}

\newtheorem{theorem}{Theorem}

\newtheorem{lemma}{Lemma}

\newtheorem{remark}{Remark}

\newcommand{\Lc}{\mathcal{L}}

\newcommand{\Cc}{\mathcal{C}}

\newcommand{\Pe}{\hat{P}_E}

\newcommand{\id}{\mathrm{id}}
\newcommand{\gE}{g(E)}
\newcommand{\E}{\mathbb{E}}
\newcommand{\Vc}{\mathcal{V}}\newcommand{\Wc}{\mathcal{W}}
\newcommand{\Qe}{\hat{Q}_E}

\newcommand{\wt}{\widetilde}

\newcommand{\tnorm}[1]{\left|\left|#1\right|\right|_1}

\newcommand{\mc}[1]{\mathcal{#1}}

\newcommand{\Bx}{\pmb{x}}
\newcommand{\By}{\pmb{y}}
\DeclareMathOperator{\e}{e}
\newcommand{\hyp}{\mathsf{Hyp}_{P}^{(N,k)}(Q)}
\newcommand{\mult}[1]{\mathsf{Mult}_{#1}^{(N,k)}(Q)}
\renewcommand{\spec}{\mathsf{Spec}}

\newcommand{\dbeta}{\mathrm{d}\beta}

\begin{document}

\title{Symmetry-driven thermalization via finite de Finetti theorems}
\author{Uttam Singh}
\email{uttam@iiit.ac.in}
\affiliation{Centre for Quantum Science and Technology (CQST), International Institute of Information Technology Hyderabad, Gachibowli 500032, Telangana, India}
\affiliation{Center for Security, Theory and Algorithmic Research (CSTAR), International Institute of Information Technology Hyderabad, Gachibowli 500032, Telangana, India}

\author{Nicolas J. Cerf\,}  
\email{nicolas.cerf@ulb.be}
\affiliation{Centre for Quantum Information and Communication, \'{E}cole polytechnique de Bruxelles,  CP 165, Universit\'{e} libre de Bruxelles, 1050 Brussels, Belgium}

\begin{abstract}
Thermal behavior in subsystems of closed quantum systems is commonly attributed to dynamical chaos, quantum ergodicity, canonical typicality, or 
the eigenstate thermalization hypothesis, suggesting a fundamentally statistical origin of thermalization. Here, we propose a potential alternative mechanism in which thermal structures emerge deterministically from symmetry considerations alone, without recourse to statistical arguments. We prove a finite de Finetti-type theorem for quantum states invariant under energy-preserving unitaries, establishing that the reduced 
marginals of any such invariant $N$-qudit state are close (both in trace distance and relative entropy) to convex mixtures of thermal product states, with explicit error bounds vanishing as $N \to \infty$. We further present an example of energy-conserving Lindblad dynamics whose long-time limit is invariant under energy-preserving unitaries, providing a dynamical realization of the desired symmetry class. These results imply that invariance under energy-preserving unitaries suffices as a sole fundamental, deterministic principle to enforce thermal structures.
\end{abstract}
\maketitle

\noindent
{\it Introduction.} Experiments across a wide range of physical situations have explored thermalization in isolated quantum systems, often demonstrating robust local thermal behavior but also exhibiting sometimes striking departures from it in integrable regimes \cite{Kinoshita2006, Langen2015a, Langen2015b, Kaufman2016}. While many systems exhibit relaxation of small subsystems toward thermal (or generalized thermal) states despite a globally unitary, energy-conserving dynamics, others may fail to do so because of strong dynamical constraints. These observations have motivated a large body of theoretical work aimed at clarifying how thermal behavior can emerge, or be obstructed, under reversible microscopic laws \cite{Jaynes1957a, Jaynes1957b, Pusz1978, Lenard78, Popescu2006, Goldstein2006, Deutsch1991, Srednicki1994, Rigol2008, Reimann2008, Rigol2009, Linden2009, Short2012, DAlessio2016, Abanin2019, Mori2018}. Taken together, these approaches can be understood as addressing three interconnected fundamental problems: (P1) identifying the equilibrium thermal states; (P2) explaining how reduced states of subsystems may possess the thermal structure identified in (P1); and (P3) finding physical scenarios in which the microscopic dynamics gives rise to the behavior of (P2), thereby leading to thermalization. For example, contributions addressing (P1) include approaches based on static or variational characterizations of equilibrium states \cite{Jaynes1957a, Jaynes1957b, Pusz1978, Lenard78}. The body of work under (P2) notably includes canonical typicality and the eigenstate thermalization hypothesis \cite{Popescu2006, Goldstein2006, Deutsch1991, Srednicki1994, Rigol2008}. Results falling under (P3) focus on equilibration and relaxation under unitary time evolution \cite{Reimann2008, Linden2009, Short2012, Kaufman2016}.

 The central objective of this work is focused on (P2). Starting from the observation that all current approaches are intrinsically probabilistic, we question whether this is fundamentally unavoidable. Canonical typicality relates thermal behavior to typical properties of states within an energy shell, whereas the eigenstate thermalization hypothesis attributes it to generic features of energy eigenstates and the associated energy spectrum. In both cases, thermalization is inferred from typicality, which is a probabilistic or statistical notion, rather than from symmetry principles or structural constraints on the allowed unitary dynamics. Accordingly, these statistical approaches are formulated without explicit reference to restrictions such as integrability, conserved quantities, or symmetry-imposed selection rules \cite{DAlessio2016, Abanin2019}.

Here, we investigate whether the constraints imposed on the admissible dynamics can serve as a starting point, thereby circumventing the need for a statistical model. In the considered scenario, energy conservation provides a minimal and physically natural restriction, confining the evolution to unitaries that commute with the Hamiltonian. Because this constraint does not single out a unique unitary, and because resolving the detailed transformation would, in practice, require precise microscopic control, we instead describe the dynamics at an effective level by averaging over all energy-preserving unitaries, i.e., by twirling with respect to the group of energy-preserving unitaries (EPU). The relevant states are then those invariant under this group.

Guided by this operational viewpoint, we characterize the states of an arbitrary quantum system that are invariant under all EPUs. For any such EPU-invariant state, we prove that the reduced state of a fixed-size subsystem is close (both in trace distance and relative entropy) to a convex mixture of thermal product states, with explicit finite-size bounds. Furthermore, if the total energy distribution is sufficiently narrow, this mixture collapses to an effective single Gibbs state, hence the subsystem thermalizes. Energy-conservation symmetry therefore fixes the local structure of EPU-invariant states and enforces local thermal behavior.

\begin{figure}[htbp!]
    \centering
    \includegraphics[width=\linewidth]{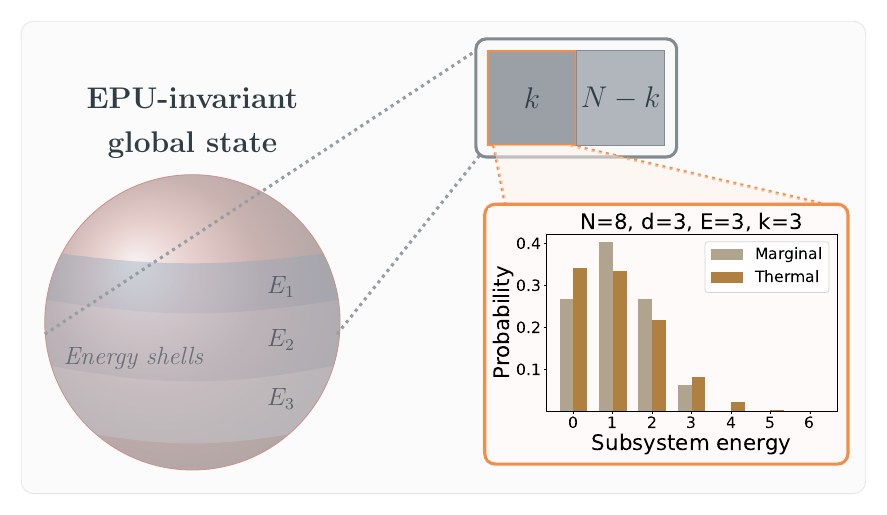}
    \caption{Symmetry-driven mechanism for the origin of thermalization. The schematic depicts the finite-dimensional de Finetti theorem established here, showing that invariance under EPUs constrains the $k$-qudit subsystems of a $N$-qudit system to exhibit thermal reduced states (with a nearly constant temperature if the total energy is narrowly distributed). The inset shows the example of a marginal energy distribution for qutrits ($d=3$) with equidistant energy levels in the case $N=8$, $k=3$, and $E=3$.}
\label{fig:sche3}
\end{figure}

The connection between global symmetry considerations and thermal marginals is illustrated schematically in Fig.~\ref{fig:sche3} for a subsystem of three out of eight qutrits. This mechanism provides a structural explanation of (P2) in that it yields a symmetry-based deterministic account of thermalization at the level of reduced states, without recourse to randomness, chaotic dynamics, or assumptions about typicality or generic spectral properties. Although this constitutes a purely static model for the existence of thermal states in subsystems (P2), we also describe a dynamical scenario in which EPU invariance naturally emerges in the long-time limit, thereby providing a possible route toward thermalization (P3).

Formally, our main result is a de Finetti–type theorem for finite-dimensional quantum states that are invariant under all EPUs. This result is inspired by a related de Finetti theorem for continuous-variable systems \cite{Leverrier2009}, where invariance under orthogonal symplectic transformations in phase space (i.e., particle-number preserving Gaussian operations in state space) was shown to lead to convex mixtures of Gaussian thermal states. In the present setting, our theorem yields a quantitative characterization of reduced states at finite system size and even remains valid under approximate invariance, with deviations being controlled by a suitable measure of asymmetry. At the heart of our approach, we make use of the method of types \cite{Cover2006, Csiszar2011}, a powerful tool of large deviation theory which constitutes a refinement of the asymptotic equipartition theorem. We note that a special case of thermalization in qubit systems with translation invariance has been analyzed in Ref. \cite{Muller2015}, which can be understood as exhibiting a finite de Finetti–type structure.

\medskip
\noindent
{\it Mathematical preliminaries.} Consider $N$ qudits ($d$-level quantum systems), each with a Hamiltonian $h$, and denote as $\mc{D}_{N}$ the set of all density matrices of $N$ qudits. Let the eigenvalues (assumed nondegenerate and equidistant) of $h$  be given by $E_x$, with the corresponding eigenstates $\{\ket{x}\}$ for $x\in \mathcal{X}:=\left\{0,\cdots,d-1\right\}$. Let the total Hamiltonian of the $N$ qudits be $H=\sum_{k=1}^N h^{(k)}$. 
The total energy eigenbasis is therefore $\{ \ket{\Bx} := \ket{x_1,\ldots,x_N} \}_{\Bx\in \{0,\ldots,d-1\}^N}$.
A unitary $U$ acting on the full space is energy-preserving if $[U,H]=0$. Let $\mc{U}_H$ be the set of $d^N\times d^N$ unitary matrices $U$ satisfying $[U,H]=0$. A state $\rho$ that is invariant under all unitaries in $\mc{U}_H$ (called EPU-invariant) must be block-diagonal in the energy eigenbasis with equal weight on all basis vectors sharing the same total energy. Let $\mc{S}_H$ be the set of states that are invariant under all unitaries in $\mc{U}_H$, i.e., $\mc{S}_H := \left\{\rho\in\mc{D}_N\,|\,U\rho U^\dag =\rho\, ,\forall\,U\in\mc{U}_H\right\}$. This set is a convex and compact subset of $\mc{D}_N$ (see Appendix~\ref{subsec:convex-compact-set}), so we will focus on its extreme points, namely the normalized projectors onto the energy eigenspaces of all possible total energies $E$.
A special example of state in $\mc{S}_H$ is given by the $N$-fold tensor product of single qudit thermal states $\tau_\beta=e^{-\beta\, h}/Z_{\beta}$, where $\beta$ is inverse temperature and $Z_\beta=\tr\left(e^{-\beta\, h}\right)$ is the partition function. For all $U\in\mc{U}_H$, it follows that $U\tau_\beta^{\otimes N}U^\dagger = \tau_\beta^{\otimes N}$. 

\medskip
\noindent
{\it Method of types.} This method allows us to parametrize energy shells in terms of empirical distributions (called types), providing a bridge between combinatorics and thermodynamics. Given a sequence $\Bx=x_1\dots x_N$ of length $N$, the \emph{type} $P_{\Bx}$ of sequence~$\Bx$ is the relative frequencies of all symbols of $\mathcal{X}$ within sequence $\Bx$ (it can be understood as the empirical probability distribution inferred from sequence $\Bx$, see, e.g., Refs. \cite{Cover2006, Csiszar2011}). That is, if $n_x$ is the number of occurrences of symbol $x$ in sequence $\Bx$, then $P_{\Bx}(x)=n_x/N$, with $x\in \mathcal{X}$.  Let $\mathcal{P}_N$ denote the set of all possible types $P$ of length-$N$ sequences \footnote{For example, if $\mathcal{X}=\{0,1\}$,
$d=2$, and $N=3$, we have 
$\mathcal{P}_N =\{(1,0),(2/3,1/3),(1/3,2/3),(0,1)\}$.}.  Then, for $P\in \mathcal{P}_N$, one defines the \emph{type class} $T_P$ of $P$ as 
\begin{align*}
 T_P:= \left\{\Bx\in \mathcal{X}^N \,|\,  P_{\Bx}=P\right\},   
\end{align*}
that is, the set of all length-$N$ sequences $\Bx$ whose empirical probability distribution matches $P$. Since the energy of sequence $\Bx$ is given by $\sum_{x=0}^{d-1}n_x \, E_x$, all sequences $\Bx\in T_P$ have the same total energy given by $E(P)=N\sum_{x=0}^{d-1} E_x\, P(x)$. The size of the type class $T_P$ is given by a multinomial coefficient, namely
\begin{align}
    |T_P|&=\binom{N}{NP}=\frac{N!}{\prod_{j=0}^{d-1}\big(N P(j)\big)!}.
\end{align}
Since there may be, in general, more than a single type satisfying the same total energy constraint, let us define the set of types with energy $E\in\spec(H)$ as 
\begin{align}
 \mc{S}(E):= \left\{P\in \mathcal{P}_N\,|\, E(P)=E\right\}.   
\end{align}
Physically, $\mc{S}(E)$ accounts for distinct occupation patterns (types) associated with the same total energy $E$, and the corresponding type classes $T_P$ enumerate the underlying basis vectors at that energy.  
The total number of strings with energy $E$ is given by $g(E) := \sum_{P\in \mc{S}(E)} |T_P|$.

The extreme states $\rho_{E}^{(N)}$, labeled by total energy $E$, are the normalized projectors onto the energy eigenspaces, i.e., the subspaces spanned by all basis states $\ket{\Bx}$ whose total energy is $E$. Equivalently, these are the strings $\Bx\in \mathcal{X}^N$ whose type $P_{\Bx}$ lies in $\mc{S}(E)$. Thus,
\begin{align}
\label{eq:extremalstate}
\rho_{E}^{(N)} 
&=\sum_{P\in \mc{S}(E)}\mu_E(P)\times \frac{1}{|T_P|}\sum_{\Bx\in T_P}\ket{\Bx}\bra{\Bx},
\end{align}
where $\mu_E(P)=|T_P| / g(E)$
is the fraction of basis states of total energy $E$ that have type $P$.



\noindent
{\it Accessing marginals.} 
To express our results, we need to evaluate the partial traces of $N$-qudit extremal states. Consider a basis sequence $\By=y_1\dots y_k$ of length $k$. The type $Q_{\By}$ of sequence $\By$ is the relative occurrences of the elements of $\mathcal{X}$ within $\By$. That is, if $m_y$ is the number of occurrences of symbol $y\in \mathcal{X}$ in the sequence $\By$, then $Q_{\By}(y)=m_y/k$. For $Q\in \mathcal{P}_k$, define the type class of $Q$ as
$T_Q :=\left\{\By\in \mathcal{X}^k\,|\, Q_{\By}=Q\right\}$, with size $|T_Q|=\binom{k}{kQ}$.

Partial tracing the last $N-k$ qudits of $\rho_E^{(N)}$ as given by Eq. \eqref{eq:extremalstate}, we get the following $k$-qudit reduced state
\begin{align}
\label{eq:ptracestate-step1}
\rho_{E}^{(k)} := \sum_{P\in \mc{S}(E)}\mu_E(P)\times \frac{1}{|T_P|}\sum_{\Bx\in T_P}\tr_{N-k}\left(\ket{\Bx}\bra{\Bx}\right).
\end{align}
For each type $P\in \mc{S}(E)$, we may express
\begin{align}
\label{eq:ptracestep1}
&\frac{1}{|T_P|}\sum_{\Bx\in T_P}\tr_{N-k}\left(\ket{\Bx}\bra{\Bx}\right)\nonumber\\
&=\!\!\!\sum_{\substack{Q\in \mathcal{P}_k\text{~such that}\\kQ(j)\leq NP(j), \, \forall j\in \mathcal{X}}} \!\!\!\frac{\prod_{j=0}^{d-1}\binom{NP(j)}{kQ(j)}}{\binom{N}{k}} \times \frac{1}{|T_Q|}\sum_{\By\in T_Q}\ket{\By}\bra{\By}.
\end{align}
Defining the multivariate hypergeometric distribution
\begin{align}
\hyp=\frac{\prod_{j=0}^{d-1}\binom{NP(j)}{kQ(j)}}{\binom{N}{k}}
\end{align}
and defining the normalized  projector onto type-$Q$ length-$k$ sequences as $\hat{\Pi}_Q:=\frac{1}{|T_Q|}\sum_{\By\in T_Q}\ket{\By}\bra{\By}$, we have
\begin{align}
\label{eq:ptracestate5}
\rho_{E}^{(k)} =\sum_{Q\in \mathcal{P}_k}\left(\sum_{P\in \mc{S}(E)} \mu_E(P)\,\,\hyp\right) \hat{\Pi}_Q.
\end{align} 
Note that the above condition \(kQ(j) \le NP(j)\) in the sum over $Q\in \mathcal{P}_k$ may be dropped \footnote{The condition \(kQ(j) \le NP(j)\) for all \(j\in \mathcal{X}\)
arises from the requirement that the factors \(\binom{NP(j)}{kQ(j)}\) in Eq. \eqref{eq:ptracestep1} must be nonzero. When we sum over \(P\), this condition is automatically enforced by the fact that the corresponding term vanishes whenever it is violated, so we may formally drop the condition.}.

We need to compare $\rho_{E}^{(k)}$ to the $k$th product of the one-qudit energy-diagonal state $\eta_P:=\sum_{x=0}^{d-1}P(x)\ket{x}\bra{x}$, for any type $P\in \mc{S}(E)$. It is easy to see that
\begin{align}
\label{eq:thermaldecomp2}
\eta_P^{\otimes k}
&=\sum_{Q\in\mathcal{P}_k}\mult{P} \,\,\hat{\Pi}_Q,
\end{align}
where we have defined the multinomial distribution as
\begin{align}
\mult{P}=\binom{k}{kQ}
\prod_{i=0}^{d-1} P(i)^{kQ(i)}.
\end{align}
The fact that $\eta_P^{\otimes k}$ depends on $N$ comes from $P\in \mc{S}(E)$. Indeed, $\eta_P$ has the same energy $E/N$ for all $P\in \mc{S}(E)$.

\medskip
\noindent
{\it Main result.} We are now ready to prove a finite de Finetti theorem for EPU-invariant states, which implies that thermal structures emerge deterministically in subsystems of an EPU-invariant state.

\begin{theorem}
\label{th:trace-de-finetti}
Let $\rho^{(N)}$ be an EPU-invariant $N$-qudit state and $\tau_{\beta}^{\otimes k}$ be the thermal (Gibbs) state at inverse temperature $\beta$ for $k$ non-interacting qudits.
Then, for any $k$-qudit marginal of $\rho^{(N)}$, there exists a mixture of \mbox{$k$-qudit} thermal states $\{\tau_{\beta}^{\otimes k}\}_{\beta}$ such that, for large $N$ and $k\ll \sqrt{N}$, we have
\begin{align}
\label{eq:t-norm-bound}
&\tnorm{\tr_{N-k}\left(\rho^{(N)}\right) - \int\mu(\dbeta)\,\tau_\beta^{\otimes k}}\nonumber\\
&~~~~~~\leq \frac{k(dk + 7d-2k+2)}{2N}+  O\left(N^{-3/2+c\,\delta}\right),
\end{align}
where $\mu(\dbeta)$ is a valid probability measure on the set $\{\beta\}$ of inverse temperatures, $c>0$ is some $d$-dependent constant, and $\delta\in\left(0,1/2c\right)$.
\end{theorem}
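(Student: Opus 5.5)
Since any EPU-invariant state is a convex mixture $\rho^{(N)}=\sum_E p(E)\,\rho_E^{(N)}$ of the extremal states of Eq.~\eqref{eq:extremalstate}, and both the partial trace and the trace norm are compatible with convex combinations, it suffices to treat a single extremal state $\rho_E^{(N)}$. For it we must find a mixture of thermal product states close to $\rho_E^{(k)}$. We then assemble $\mu(\dbeta)$ as the $p(E)$-weighted combination of the measures built for each $E$. The comparison goes through the two intermediate objects already introduced: the type-mixture $\sum_{P\in\mc{S}(E)}\mu_E(P)\,\eta_P^{\otimes k}$ and a thermal product state.

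\textbf{Step 1 (hypergeometric vs.\ multinomial).} By Eqs.~\eqref{eq:ptracestate5} and \eqref{eq:thermaldecomp2}, both $\rho_E^{(k)}$ and $\sum_P\mu_E(P)\eta_P^{\otimes k}$ are diagonal in the same orthogonal family $\{\hat{\Pi}_Q\}$. Their trace distance is therefore bounded by $\sum_P\mu_E(P)\sum_Q|\hyp-\mult{P}|$, which is the total-variation distance between sampling without and with replacement. I would use the Diaconis--Freedman-type bound, or compute it directly: the ratio $\hyp/\mult{P}$ expands as $\prod_j\prod_{i<kQ(j)}(1-i/(NP(j)))/\prod_{i<k}(1-i/N)$. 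Expanding to first order in $1/N$ and summing the expected correction gives a bound of the form $k(\alpha d k+\beta d+\dots)/(2N)$. This step produces part of the explicit $k(dk+7d-2k+2)/(2N)$ term.

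\textbf{Step 2 (type mixture vs.\ thermal mixture).} Each $\eta_P$ with $P\in\mc{S}(E)$ has mean energy $E/N$, and $\mu_E(P)\propto|T_P|\approx 2^{NH(P)}$ concentrates exponentially on the maximum-entropy type at fixed mean energy. For equidistant nondegenerate levels that type is the Gibbs distribution $\tau_{\beta(E)}$. Concretely, I would use the method-of-types estimate $|T_P|\le 2^{NH(P)}$ and $|T_P|\ge (N+1)^{-d}2^{NH(P)}$. Together with $H(\tau_\beta)-H(P)=D(P\|\tau_\beta)$ for $P$ of equal energy, this shows that types with $D(P\|\tau_{\beta})\ge N^{-1+c\delta}$-ish carry weight $\le \mathrm{poly}(N)\,e^{-N^{c\delta}}$. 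For the remaining types, $\|P-\tau_\beta\|_1=O(N^{-1/2+c\delta/2})$.

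A naive bound $k\|P-\tau_\beta\|_1$ would then only give $O(kN^{-1/2})$. To do better, I would treat this via a second-order expansion: the mixture over $P$ of $\eta_P^{\otimes k}$ has first-order fluctuations $P-\tau_\beta$ that average out, since the distribution over types is approximately Gaussian centered at $\tau_\beta$ within the energy hyperplane. Alternatively, I would match it by a mixture over $\beta$ rather than a single $\beta$. The remaining terms contribute the $O(k^2/N)$ covariance correction, which is absorbed into the explicit term, plus $O(N^{-3/2+c\delta})$. The condition $k\ll\sqrt N$ ensures the expansion in $k\,\|P-\tau_\beta\|$ is controlled.

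\textbf{Main obstacle.} Step 2 is the main obstacle: obtaining $1/N$ rather than $1/\sqrt N$ requires showing that the linear fluctuation terms cancel. This uses the fact that the center of $\mu_E$ is exactly the (discrete) maximizer, up to $O(1/N)$ lattice effects. I would first try an explicit Laplace-type expansion of $\mu_E$ on the $(d-2)$-dimensional lattice of types at fixed energy. If the exact centering fails, I would fall back on choosing $\mu(\dbeta)$ as a point mass at the $\beta$ matching the $\mu_E$-mean of $P$, to absorb the first moment.
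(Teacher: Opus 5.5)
Your architecture is the paper's. You decompose into shells with $\mu$ a sum of point masses at $\beta_{E/N}$, and pass through $\sum_{P\in\mc{S}(E)}\mu_E(P)\,\eta_P^{\otimes k}$ by the triangle inequality. The first piece is bounded by Diaconis--Freedman ($2kd/N$). For the second, you expand $\eta_P^{\otimes k}$ to second order about $\tau_{\beta_{E/N}}^{\otimes k}$ and get the first two moments of $P-P_\beta$ under $\mu_E$ from a Laplace expansion on the $(d-2)$-dimensional type lattice.

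\textbf{The gap.} Step 2 never gets beyond orders of magnitude (``absorbed into the explicit term''), whereas the theorem asserts a specific coefficient. As written, your plan yields at best $O(k/N)+O(k^2/N)$ with unspecified constants.

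\textbf{The missing first-moment computation.} The first moment is not centred up to ``lattice effects.'' By Poisson summation the shifted-lattice Gaussian sums equal their integrals up to $O(e^{-cN})$; you need this, otherwise you incur an uncontrolled $O(k/N)$ error. Instead, $\sum_P\mu_E(P)(P-P_\beta)$ is a genuinely nonzero $O(1/N)$ vector in $T=\{x:\sum_ix_i=0,\ \sum_iE_ix_i=0\}$. It has two sources:
\begin{itemize}
\item the Stirling prefactor $\prod_jP(j)^{-1/2}$ in $|T_P|$, which the $(N+1)^{-d}2^{NH(P)}$ bounds cannot see, so Robbins' form of Stirling is needed;
\item the cubic Taylor term of $NH(P)$, paired with fourth Gaussian moments.
\end{itemize}
It equals $-\frac{1}{2N}\mathsf{Proj}_Kq$, where $\mathsf{Proj}_K$ is the $K$-orthogonal projector onto $T$, $K=\mathrm{diag}(1/P_\beta)$, and $q_j=1-(\mathsf{Proj}_K)_{jj}$. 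For example, for $d=3$, equidistant levels and $\beta=0$ it is $-\frac{1}{2N}(\tfrac16,-\tfrac13,\tfrac16)$. Since this vector is multiplied by $k$, it contributes at the same order $k/N$ as the target bound, so its size must be controlled, not absorbed.

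\textbf{The missing bounds.} The second moment is $\mathsf{Mat}/N$ with $\mathsf{Mat}=\mathsf{Proj}_KK^{-1}$. You then need the two bounds of the paper's Lemma~2:
\begin{itemize}
\item $\sum_{ij}|\mathsf{Mat}_{ij}|\le\operatorname{tr}\mathsf{Proj}_K=d-2$, by Cauchy--Schwarz;
\item $\|\mathsf{Proj}_Kq\|_1\le4+8\kappa$, which equals $4d$ for equidistant levels. This uses $q_j\in[0,1]$, $\sum_jq_j=2$, and the explicit $K$-orthogonal complement $z_i=P_\beta(i)(\alpha+\gamma E_i)$.
\end{itemize}
These give $2kd/N+k(k-1)(d-2)/(2N)$ for the second piece. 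Together with $2kd/N$ from the first piece, this is exactly $k(dk+7d-2k+2)/(2N)$.

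\textbf{Your fallback does not work.} Every $P\in\mc{S}(E)$ has energy $E/N$, so the $\beta$ matching the $\mu_E$-mean is $\beta_{E/N}$ again. The residual mean lies in $T$, while $P_{\beta'}-P_{\beta_{E/N}}$ has nonzero mean energy for $\beta'\neq\beta_{E/N}$. So no point mass can absorb it.

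\textbf{Uniformity in $p_{\min}$.} The Laplace step needs $\min_jP_{\beta_{E/N}}(j)\ge p_{\min}>0$ uniformly over the shells in the support, because the implicit constant in $O(N^{-3/2+c\delta})$ depends on $p_{\min}$. The paper makes this explicit and excludes the edge shells.
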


\begin{proof}
We may write $\rho^{(N)}=\sum_{E\in \mathsf{Spec}(H)} c_E \, \rho^{(N)}_E$, i.e., a convex mixture of extreme states in $\mc{S}_H$, with the assumption that $E\ne NE_0$ and $NE_{d-1}$ \footnote{The energies $E=NE_0$ and $NE_{d-1}$ correspond to two trivial extreme states $\rho^{(N)}_E$ that are pure and remain pure after partial trace. Thus, we must exclude them from Theorem \ref{th:trace-de-finetti}  and assume $E\notin \{NE_0,NE_{d-1}\}$ in the rest of the proof.}. 
Let $\mu(\dbeta):=\sum_{E\in\spec(H)}c_E\,\delta (\beta-\beta_{E/N})\,\dbeta$ be a probability measure, where $\beta_{E/N}$ is the inverse temperature obtained by solving $\sum_{i=0}^{d-1}E_i\e^{-\beta_{E/N}E_i}=(E/N)\sum_{i=0}^{d-1}\e^{-\beta_{E/N}E_i}$. 
Then,
\begin{align}
   &\tnorm{\tr_{N-k}\left( \rho^{(N)}\right)-\int_{\beta}\mu(\dbeta)\,\tau_{\beta}^{\otimes k}}\nonumber\\ &= \tnorm{\sum_{E\in \mathsf{Spec}(H)} c_E  \left(\tr_{N-k}\left( \rho^{(N)}_E\right)- \tau{\scriptstyle(E/N)}^{\otimes k}\right)}\nonumber\\
   &\leq \sum_{E\in \mathsf{Spec}(H)} c_E \, \tnorm{ \rho^{(k)}_E- \tau{\scriptstyle(E/N)}^{\otimes k}},
\end{align}
where $\tau{\scriptstyle(E/N)}:=\tau_{\beta_{E/N}}$. 
Further, we use the triangle inequality $\tnorm{ \rho^{(k)}_E- \tau{\scriptstyle(E/N)}^{\otimes k}} \leq \mathsf{norm1} + \mathsf{norm2}$, where $\mathsf{norm1}=\tnorm{\rho_E^{(k)} -\sum_{P\in \mc{S}(E)}\mu_E(P)\,\eta_P^{\otimes k}}$ and $\mathsf{norm2}=\tnorm{\sum_{P\in \mc{S}(E)}\mu_E(P)\,\eta_P^{\otimes k}- \tau{\scriptstyle(E/N)}^{\otimes k}}$. Using Eqs. \eqref{eq:ptracestate5} and \eqref{eq:thermaldecomp2}, as well as the fact that $\hat{\Pi}_Q$ are orthogonal normalized projectors, we have
\begin{align}
\label{eq-bound-on norm1}
&\mathsf{norm1}\nonumber\\
&~~~\leq\sum_{P\in \mc{S}(E)} \mu_E(P)\sum_{Q\in\mathcal{P}_k}\left|\hyp-\mult{P}\right|\nonumber\\
     &~~~\leq \frac{2kd}{N},
\end{align}
where the second inequality comes from a bound on the total variation distance connecting sampling  \emph{with} and \emph{without} replacement. In our notations, it is proven that $\sum_{Q\in\mathcal{P}_k}\left|\hyp-\mult{P}\right|\leq \frac{2kd}{N}$ in Ref. \cite{Diaconis1980}. This bound is tight in the regime of interest $k\ll N$. 

Now, we upper bound $\mathsf{norm2}$ recalling that  $\eta_P\!=\!\sum_{x=0}^{d-1}P(x)\ket{x}\bra{x}$ and $\tau{\scriptstyle(E/N)}\!=\!\sum_{x=0}^{d-1}P_\beta(x)\ket{x}\bra{x}$, where $P_\beta(x) = e^{-\beta_{E/N}E_x}/\left(\sum_{x=0}^{d-1}e^{-\beta_{E/N}E_x}\right)$. We obtain
\begin{align}
\label{eq-bound-on norm2}
\mathsf{norm2}
&\leq  \frac{k(dk+3d-2k+2)}{2N} + O\left(N^{-3/2+c\,\delta}\right),
\end{align}
see Appendix \ref{subsec:main-results} for the technical proof of this inequality. Combining Eqs. \eqref{eq-bound-on norm1} and \eqref{eq-bound-on norm2} completes the proof of the theorem.
\end{proof}

\noindent
{\it Interpretation of Theorem \ref{th:trace-de-finetti}.} This finite de Finetti-type theorem implies that the $k$-qudit marginals of an EPU-invariant $N$-qudit state are uniformly well approximated, for large $N$ and $k\ll \sqrt{N}$, by convex mixtures of thermal product states at different inverse temperatures. In Appendix \ref{subsec:simple}, this convergence is illustrated in the simple case of qutrits ($d=3$) with equidistant energy levels. Note that the approximation does not, in general, lead to a single Gibbs state. An EPU-invariant state may have support on several energy shells, each with a distinct effective temperature, so that the marginal inherits a mixture weighted by the energy distribution. Equivalently, the subsystem is thermal but at an unknown inverse temperature $\beta$ with $\mu(\mathrm{d}\beta)$ expressing the uncertainty about $\beta$. When the energy distribution is narrow, as for macroscopic systems, $\mu$ concentrates and the mixture approaches a single Gibbs state, yielding conventional thermal behavior in subsystems. In this sense, the finite de Finetti theorem identifies a symmetry-driven mechanism for the emergence of thermal structures, while retaining quantitative control over finite-size deviations. In particular, for any fixed subsystem size $k$, the deviation between the reduced state and the corresponding thermal mixture vanishes as the total system size $N$ increases, with a convergence rate scaling as $1/N$.

Although the proof employs asymptotic techniques such as Stirling’s approximation and Laplace’s method, all estimates are non-asymptotic and hold uniformly for fixed N, $k\ll \sqrt{N}$, and $d$. In addition, we strengthen this result in Appendix \ref{sup:rel-de-finneti} by proving an analogous de Finetti theorem where the relative entropy is used instead of the trace distance to bound the distance between the $k$-qudit marginals of an EPU-invariant state and the mixture of $k$-qudit thermal states. Furthermore, we also consider scenarios where the states satisfy only approximate invariance under EPUs. In those cases, we show that the distance between the $k$-qudit marginals of an EPU-invariant state and the mixture of $k$-qudit thermal states is additionally controlled by a specific measure of asymmetry of the total quantum state.

\noindent
{\it Dynamical emergence of EPU-invariant states.} Why should EPU-invariant states arise in physical scenarios at all? To answer this question, we consider a physically motivated class of open-system dynamics that converges to an exactly EPU-invariant state, thereby enabling a direct application of our de Finetti theorem. 
We take our system of $N$ qudits with Hamiltonian $H$, and, for each total energy $E$, we define the projector 
$\hat P_E:=\sum_{P\in S(E)}\sum_{\Bx\in T_P}\ket{\Bx}\bra{\Bx}$. Note that $\mathrm{tr}(\hat P_E)=g(E)$.
Let the initial state of the system be given by $\rho_0$.  We consider the dynamics generated by a Lindbladian of the form $\mathcal{L} := \mathcal{L}_{\mathrm{block}} + \mathcal{L}_{\mathrm{deph}}$, where
\begin{align}
\!\mathcal{L}_{\mathrm{block}}(\rho_0)
&=
\sum_{E} \gamma_E
\left[
\frac{\mathrm{tr}(\hat P_E \rho_0 \hat P_E)}{g(E)}\, \hat P_E - \hat P_E \rho_0 \hat P_E
\right],
\label{eq:Lblock-rig}
\\
\!\mathcal{L}_{\mathrm{deph}}(\rho_0)
&=
\sum_{E} \lambda_E 
\left[
\hat P_E \rho_0 \hat P_E - \tfrac{1}{2}\{\hat P_E,\rho_0\}
\right],
\label{eq:Ldeph-rig}
\end{align}
with $\lambda_E\ge 2\gamma_E \ge0$. Note that the Hamiltonian $H$ does not affect the long-time behavior, so we only consider the energy-preserving dissipative dynamics that is induced by $\mathcal{L}$. We show in Appendix \ref{eq:Lindblad} that such a Lindbladian arises in a collision model, and that solving the resulting dynamics yields
\begin{align}
    \rho(t) 
    &:= e^{t\mathcal{L}}(\rho_0)\nonumber\\
    &=\sum_{E}\left[(1-\varepsilon_E(t))\,\hat P_E\rho_0\hat P_E+\varepsilon_E(t)\,\mathrm{tr}(\hat P_E\rho_0)\,\frac{\hat P_E}{g(E)}\right]
\notag\\
&\quad +
\sum_{E\neq E'} 
e^{-\frac{1}{2}(\lambda_E+\lambda_{E'})t}\,
\hat P_E\rho_0 \hat P_{E'},
\label{eq:rho-t-full}
\end{align}
where $\varepsilon_E(t)=1-e^{-t\gamma_E}$. As $t\to\infty$, $\varepsilon_E(t)\to1$ and $e^{-\frac{1}{2}(\lambda_E+\lambda_{E'})t} \to 0$, yielding the limiting state
\begin{align}
\lim_{t\to\infty}\rho(t)=\sum_E \mathrm{tr}(\hat P_E\rho_0)\,\frac{\hat P_E}{g(E)}=:\mathcal{T}_{\mathrm{EPU}}(\rho_0).
\end{align}
Thus, the dissipative dynamics converges exponentially to the EPU-twirl of the initial state, which is the unique EPU-invariant fixed point of the semigroup.
 
Let $\Delta\rho_0 = \rho_0-\mathcal{T}_{\mathrm{EPU}}(\rho_0)$. Since $e^{t\mc{L}}\left(\mathcal{T}_{\mathrm{EPU}}(\rho_0)\right)= \mathcal{T}_{\mathrm{EPU}}(\rho_0)$, we have $e^{t\mc{L}}\left(\Delta\rho_0\right) = \rho(t)-\mathcal{T}_{\mathrm{EPU}}(\rho_0)$. Thus, $\tnorm{e^{t\mc{L}}(\rho_0)-\mathcal{T}_{\mathrm{EPU}}} =\tnorm{e^{t\mc{L}}\left(\Delta\rho_0\right)}$. Then, we show in Appendix \ref{eq:Lindblad} that 
\begin{align}
\tnorm{e^{t\mc{L}}(\rho_0)-\mathcal{T}_{\mathrm{EPU}}} \leq 2 e^{-t\Gamma}\tnorm{\Delta\rho_0}
   \leq 4 e^{-t\Gamma},    
\end{align}
where $\Gamma:=\min_E\{\gamma_E, \lambda_E\}$ and $\tnorm{\Delta\rho_0}\le 2$. Choosing $t\geq \Gamma^{-1}\log(4/\epsilon)$ implies that $\tnorm{e^{t\mc{L}}(\rho_0)-\mathcal{T}_{\mathrm{EPU}}}\leq \epsilon$. This means that any state approaches its EPU-twirl under the above Lindbladian dynamics on a timescale
\(O\left(\Gamma^{-1}\log(1/\epsilon)\right)\). Therefore, as a consequence of Theorem~\ref{th:trace-de-finetti}, its marginals thermalize on a timescale
\(O\left(\Gamma^{-1}\log(1/\epsilon)\right)\).

\medskip
\noindent
{\it Conclusion.} We have established a symmetry-based, deterministic mechanism for the emergence of thermal behaviors in quantum subsystems. By proving finite de Finetti–type theorems for states invariant under energy-preserving unitaries, we showed that thermal marginals follow directly from symmetry, with explicit finite-size and dimension-dependent bounds. This yields a non-statistical account of thermalization that does not rely on chaos, randomness, or ensemble sampling, and hence contrasts with established frameworks such as the the eigenstate thermalization hypothesis or canonical typicality. We further showed that this symmetry-based route to thermalization can arise as long-time limits of physically motivated Lindbladian dynamics.

Our proof uses the non-interacting structure $H=\sum_k h^{(k)}$, which allows the energy shells to be identified with type classes. With interactions, two cases arise according to whether the local terms commute. If they do, then e.g. for nearest-neighbour couplings, shells become classes of second-order types and remain exponentially degenerate~\cite{Csiszar1998}. A de Finetti theorem may then be expected \cite{Zaman1986}, with the thermal product states replaced by the reduction of the global Gibbs state. If they do not, the degeneracies are usually lifted, and the symmetry itself would have to be coarse grained to restore them \cite{Muller2015}. It will be very interesting to explore these directions.

Finally, since diagnosing thermalization directly from dynamics is difficult \cite{Devulapalli2025}, it is natural to seek simpler criteria. Noting that symmetry can often be tested efficiently \cite{LaBorde2023, Rethinasamy2025}, the symmetry-based approach developed here suggests that thermal structures could be inferred from the observed EPU-invariance. More broadly, our analysis points to symmetry as a potential operational witness of thermality in constrained quantum systems.

\medskip
\noindent
{\it Acknowledgments.} US thanks S. Das, V. Pandey, S. Jay, A. Ghorui, and L. Zhang for insightful discussions. US acknowledges support from the Ministry of Electronics and Information Technology (MeitY), Government of India, under Grant No. 4(3)/2024-ITEA. US thanks IIIT Hyderabad for support via the Faculty Seed Grant. NJC acknowledges support from the Fonds de la Recherche Scientifique (F.R.S.–FNRS) under Grant No T.0060.26 as well as under project CHEQS within the Excellence of Science (EOS) program.

\bibliography{de-Finetti}

\appendix
\onecolumngrid

\ \\ \ \\ \ \\
\noindent
The Appendix is organized as follows. In Appendix~\ref{subsec:simple}, we introduce a simple model of $N$  qutrits and numerically demonstrate the convergence of marginals of EPU–invariant states toward thermal states. In Appendix~\ref{subsec:convex-compact-set}, we prove that the set of EPU-invariant states is a convex and compact set.  Appendix~\ref{subsec:main-results} contains the proof of a key technical ingredient underlying Theorem~\ref{th:trace-de-finetti}. In Appendix~\ref{sup:rel-de-finneti}, we state and prove a second de Finetti theorem for EPU–invariant states, establishing convergence to mixtures of thermal states in terms of relative entropy. Finally, Appendix~\ref{eq:Lindblad} discusses approximate symmetry and presents the details of a Lindblad dynamics that drives a quantum system toward an EPU–invariant state.

\section{Simple example}
\label{subsec:simple}

Consider a system of $N$ qutrits ($d=3$), each with Hamiltonian $h = \ketbra{1}{1} + 2\, \ketbra{2}{2}$, and let $H = \sum_{i=1}^N h^{(i)}$ be the total Hamiltonian, where $h^{(i)}=h$ for all $1\leq i\leq N$. The states with total energy $1$ are those with exactly one system in state $\ket{1}$ and all other qutrits in state $\ket{0}$. There are exactly $N$ such configurations, denoted as
\begin{align}
\ket{e_i^{(1,N)}} := \ket{0}^{\otimes (i-1)} \otimes \ket{1} \otimes \ket{0}^{\otimes (N - i)}, ~~ 1\leq i\leq N.
\end{align}
Now consider the maximally mixed state over the subspace with total energy $1$, namely
\begin{align}
\rho^{(N)}_1 = \frac{1}{N} \sum_{i=1}^N \ketbra{e_i^{(1,N)}}{e_i^{(1,N)}},
\label{eq:A2}
\end{align}
which is obviously invariant under all EPUs. The reduced state on the first $k$ qutrits, obtained by tracing out the remaining $N-k$ qutrits, is given by
\begin{align}
\label{eq:E=1}
\rho^{(k)}_1 = \tr_{k+1,\dots,N} \left(\rho_1^{(N)}\right) = \frac{1}{N} \left((N - k) \ketbra{0}{0}^{\otimes k}+ \sum_{i=1}^k \ketbra{e_i^{(1,k)}}{e_i^{(1,k)}}  \right).
\end{align}
Similarly, consider now the states with total energy $2$. Such states fall into two distinct classes.

\medskip
\noindent
i) States with two qutrits in  $\ket{1}$ and the rest in $\ket{0}$. There are $\binom{N}{2}$ such states, of the form
  \begin{align}
  \ket{e^{(1,N)}_{i,j}} := \ket{0}^{\otimes (i-1)} \otimes \ket{1} \otimes \cdots \otimes \ket{1} \otimes \ket{0}^{\otimes (N-j)},~~1 \le i < j \le N.
  \end{align}

\begin{figure}[t!]
\centering
\subfloat[$k=1$ and $E=1$]{
{\includegraphics[width=0.45\textwidth]{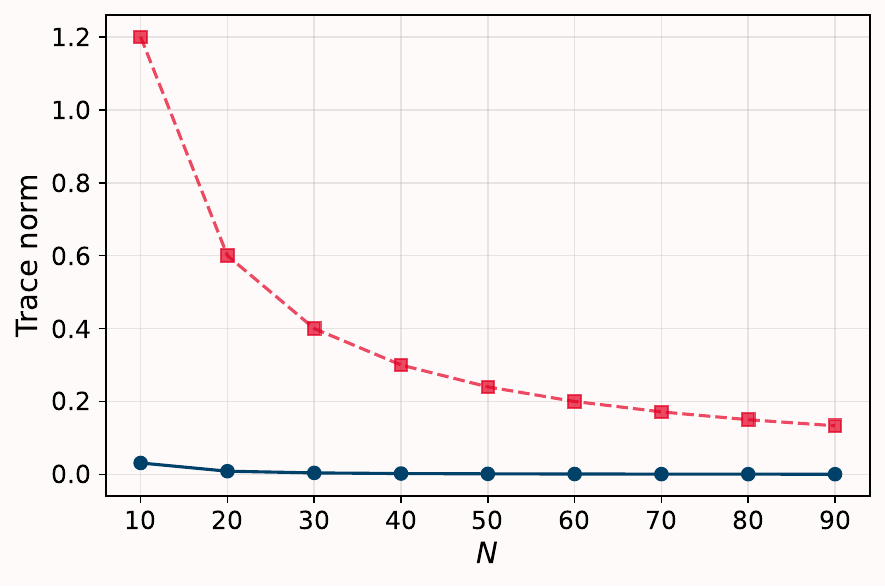}}\label{fg.N1}}
\subfloat[$k=1$ and $E=2$]{
{\includegraphics[width=0.45\textwidth]{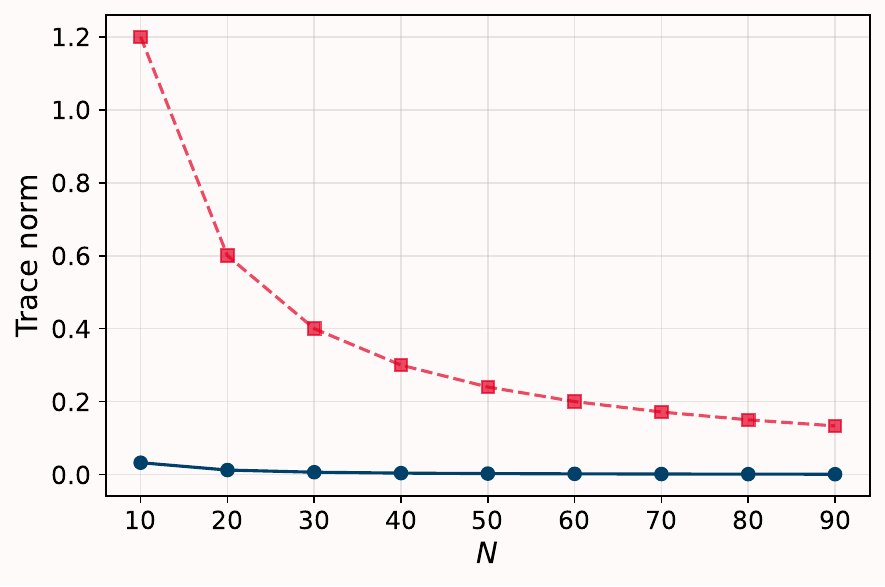}}\label{fg.N2}}
\\
\subfloat[$N=100$ and $E=1$]{
{\includegraphics[width=0.45\textwidth]{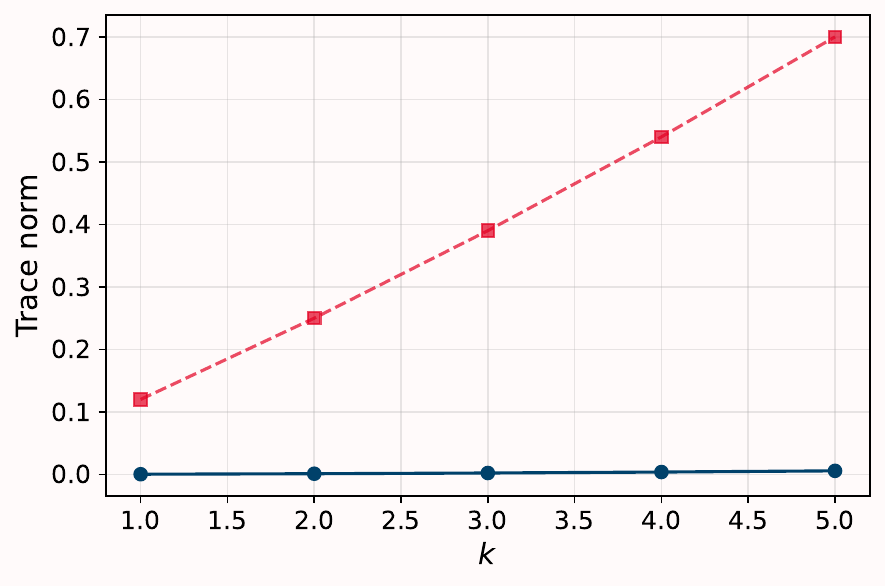}}\label{fg.k1}}
\subfloat[$N=100$ and $E=2$]{
{\includegraphics[width=0.45\textwidth]{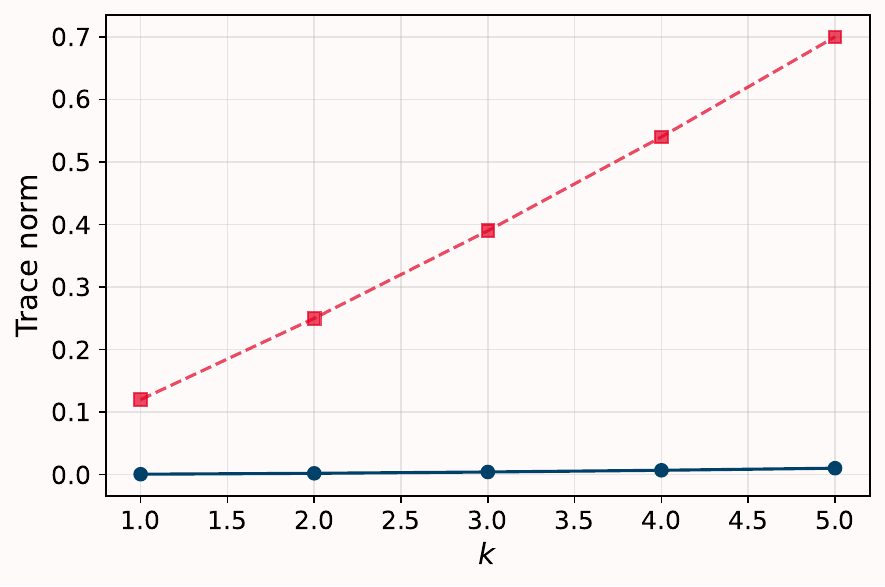}}\label{fg.k2}}
\caption{Convergence of reduced marginals to thermal mixtures. The single-qutrit Hamiltonian is taken to be $h=\ketbra{1}{1}+2\ketbra{2}{2}$ and the input states are uniform mixtures of energy $E=1$ and $E=2$ eigenstates of the total $N$-qutrit Hamiltonian. The $k$-qutrit marginals are given by Eqs. \eqref{eq:E=1} and \eqref{eq:E=2} for total energy $E=1$ and $E=2$, respectively, while the target thermal state is determined by the energy $E/N$ per qutrit. Panels (a) and (b) show the trace norm between the single qutrit marginals and the single qutrit thermal state as function of $N$ for fixed $E=1$ and $E=2$, respectively. Panels (c) and (d) show the trace norm between $k$-qutrit marginals of $N=100$ qutrit EPU-invariant state and $k$-qutrit thermal state as function of $k$ for $E=1$ and $E=2$, respectively. The red line in each curve represents the theoretical bound of Theorem \ref{th:trace-de-finetti}.
}
\label{fg.trace-de-finetti}
\end{figure}

\medskip
\noindent
ii) States with a single qutrit in $\ket{2}$ and the rest in $\ket{0}$. There are $N$ such states of the form
\begin{align}
  \ket{e^{(2,N)}_i} := \ket{0}^{\otimes (i-1)} \otimes \ket{2} \otimes \ket{0}^{\otimes (N-i)},~~1\leq i \leq N.
\end{align}
The total number of such energy-2 configurations is $M =\frac{N(N+1)}{2}$ and the uniform mixture of them is expressed as 
\begin{align}
\rho^{(N)}_2 = \frac{1}{M} \left( \sum_{1 \le i < j \le N} \ketbra{e^{(1,N)}_{i,j}}{e^{(1,N)}_{i,j}} + \sum_{i=1}^N \ketbra{e^{(2,N)}_i}{e^{(2,N)}_i} \right).
\label{eq:A6}
\end{align}
We again compute the reduced state $\rho^{(k)}_2 = \tr_{k+1,\dots,N} \left( \rho^{(N)}_2\right)$ on the first \( k \) qutrits. The contributions decompose as follows: (a) $\binom{k}{2}$ terms from $e^{(1,k)}_{i,j}$ with $1 \le i< j \le k$, yielding two qutrits in $\ket{1}$ in the first  $k$ qutrits; (b) $k(N-k)$ terms with exactly one of $i, j \le k$, contributing states with a single qutrit in $\ket{1}$ within the first $k$ qutrits; (c) $\binom{N-k}{2}$ terms with both $i, j > k$, which contribute $ \ketbra{0}{0}^{\otimes k}$; (d) $k$ terms from $ e^{(2,k)}_i$ with $i \le k$, contributing states with a single qutrit in $\ket{2}$ at position $i \le k$; (e) $N-k$ terms from $e^{(2,k)}_i$ with $i > k$, contributing again $\ketbra{0}{0}^{\otimes k}$. Putting all these together, we obtain
\begin{align}
\label{eq:E=2}
\rho^{(k)}_2 = \frac{1}{M} \left(
\sum_{1 \le i < j \le k} \ketbra{e^{(1,k)}_{i,j}}{e^{(1,k)}_{i,j}}
+ \sum_{i=1}^k (N-k) \ketbra{e^{(1,k)}_i}{e^{(1,k)}_i}
+ \sum_{i=1}^k \ketbra{e^{(2,k)}_i}{e^{(2,k)}_i}
+ \binom{N-k+1}{2} \ketbra{0}{0}^{\otimes k}
\right).
\end{align}
A similar construction could be made with increasing complexity for $E>2$, but this Appendix aims at illustrating our results in a simple case.

Theorem \ref{th:trace-de-finetti} provides a finite de Finetti theorem implying that the reduced marginals of states that are invariant under all EPUs converge to thermal mixtures.  Here, the EPU-invariant states of total energy $E=1$ or $2$ (i.e., uniform mixtures of energy $E=1$ or $E=2$ eigenstates of the total $N$-qutrit Hamiltonian) are given by Eqs. \eqref{eq:A2} and \eqref{eq:A6}, respectively.
Then, the $k$-qutrit marginals are given by Eqs. \eqref{eq:E=1} and \eqref{eq:E=2} for total energy $E=1$ and $E=2$, respectively, and the target single-qutrit thermal state is determined by the total energy $E/N$ per qutrit. Figs. \ref{fg.N1} and \ref{fg.N2} show the trace distance between the single-qutrit marginal of the EPU-invariant states and the single-qutrit thermal state as function of $N$ for fixed $E=1$ and $E=2$, respectively. Figs. \ref{fg.k1} and \ref{fg.k2} show the trace distance between $k$-qutrit marginals of $20$-qutrit EPU-invariant state as a function of $k$ for $E=1$ and $E=2$, respectively. From these examples, numerically,  it is clear that the reduced marginals converge to a thermal state as $N$ increases (though the convergence is slower for larger-$k$ marginals). This corroborates our main point that the thermal structure emerges deterministically and is rooted in symmetry rather than statistics.

\section{Set $\mc{S}_H$ of EPU-invariant states}
\label{subsec:convex-compact-set}

The set of states that are invariant under all unitaries in $\mc{U}_H$ is denoted as
\begin{align}
\label{eq:inv-states}
    \mc{S}_H := \left\{\rho\in\mc{D}_N\,|\,U\rho U^\dag =\rho\, ,\forall\,U\in\mc{U}_H\right\}.
\end{align}
Let us prove that $\mc{S}_H\subseteq \mc{D}_N$ is a convex and compact set. The convexity of $\mc{S}_H$ follows by noting that for any two states $\rho,\,\sigma\in\mc{S}_H$, we have $p\rho+(1-p)\sigma\in \mc{S}_H$ for all $p\in[0,1]$. Now, consider any convergent sequence $\left(\rho_m\right)\subseteq \mc{S}_H$. Positivity, unit trace, and the invariance condition $U\rho_mU^\dag=\rho_m$, $\forall m$, $\forall U\in\mc{U}_H$, all pass to the limit under norm convergence. Hence $\rho:=\lim_{m\to\infty}\rho_m$ also belongs to $\mc{S}_H$, showing that $\mc{S}_H$ is  closed. Since $\mc{S}_H$ is a closed  and bounded subset of a finite-dimensional space, it is compact. As a result, every EPU-invariant can be written as a convex sum of the extreme points of $\mc{S}_H$. This brings a major simplification as we will only need to focus on the extreme states in $\mc{S}_H$ (given that partial tracing is linear).

\section{Proof details of Theorem \ref{th:trace-de-finetti}}
\label{subsec:main-results}

Throughout, the single-particle levels are labelled in increasing order,
$E_0<E_1<\cdots<E_{d-1}$, and are assumed \emph{commensurate}: there exist
$\chi\in\mathbb{R}$, $\chi'>0$ and integers $q_0<\cdots<q_{d-1}$ with
\begin{align}
E_x=\chi+q_x\,\chi',\qquad x=0,\dots,d-1,
\end{align}
equivalently, all ratios $(E_i-E_j)/(E_k-E_l)$ of level differences are rational. Theorem \ref{th:trace-de-finetti} to the equidistant case $q_x=x$. Commensurability enters at exactly one point, in the proof of Lemma~\ref{le:lemma1:appendix}. It guarantees that the entries of the matrix $B$ of Eq.~\eqref{eq:commensurate-B} are rational, hence that $\Lambda=\{w\in\mathbb{R}^{d-2}:Bw\in\mathbb{Z}^d\}$ is a lattice of full rank $d-2$ and an invertible $J$ with $\Lambda=J\mathbb{Z}^{d-2}$ exists. Physically, it is the condition under which distinct types can carry the same total energy, so that $\mathcal{S}(E)$ contains many types and $\mu_E$ concentrates on the maximum-entropy point $P_\beta$ of $\Delta(E)$. The exponential degeneracy $g(E)$ of a shell does not itself require commensurability, since
a single type class $T_P$ already has multinomial size. For an incommensurate spectrum, $\mathcal{S}(E)$ generically reduces to a single type $P$, and the $k$-qudit marginal is
still asymptotically the product $\eta_P^{\otimes k}$; but $P$ need not be close to $P_\beta$, so the de Finetti structure survives while thermality fails.

We take $d\ge 3$ throughout. For $d=2$ the two constraints defining $\Delta(E)$ already determine $P$ uniquely, so $\mathcal{S}(E)$ and $\Delta(E)$ are singletons and $P_\beta$
coincides with the unique type; the subspace $T$ of Eq.~\eqref{eq:eq-T-append} is trivial, and $\tnorm{\sum_{P\in\mathcal{S}(E)}\mu_E(P)\eta_P^{\otimes k}-\tau(E/N)^{\otimes k}}=0$ identically, so Theorem~\ref{th:trace-de-finetti} reduces to the exchangeability bound of Eq.~\eqref{eq:replace-append-bound} \cite{Diaconis1980}.

Recall that $\eta_P=\sum_{i=0}^{d-1}P(i)\ket{E_i}\bra{E_i}$ and $\tau{\scriptstyle(E/N)}=\sum_{i=0}^{d-1}P_\beta(i)\ket{E_i}\bra{E_i}$, where 
\begin{align}
P_{\beta_{E/N}}(i) = \frac{e^{-\beta_{E/N}E_i}}{\sum_{j=0}^{d-1}e^{-\beta_{E/N}E_j}}
\end{align}
and $\beta_{E/N}$ is the inverse temperature obtained by solving the equation 
\begin{align}
    \sum_{i=0}^{d-1}E_i\e^{-\beta_{E/N}E_i}=(E/N)\sum_{i=0}^{d-1}\e^{-\beta_{E/N}E_i}.
\end{align}
For a fixed energy $E$, we will identify $\beta_{E/N}\equiv \beta$. Let
\begin{align}
\Delta(E):=\Big\{P\in\mathbb{R}^d_{\ge0}\ :\ \sum_{i=0}^{d-1}P(i)=1,\ \
\sum_{i=0}^{d-1}E_i\,P(i)=E/N\Big\}
\label{eq:energy-simplex}
\end{align}
denote the energy-constrained simplex. $P_{\beta}\in\Delta(E)$, but $P_{\beta}$ is in general
not a type, since $\mathcal{S}(E)$ consists of rational vectors $l/N$ with $l\in\mathbb{Z}^d_{\ge0}$. Since, for every $P\in \Delta(E)$, $-\sum_i P(i)\ln P_{\beta}(i)=(E/N)\beta+\ln \left(\sum_{i=0}^{d-1}\e^{-\beta E_i}\right)$ and $H(P_{\beta})=(E/N)\beta+\ln \left(\sum_{i=0}^{d-1}\e^{-\beta E_i}\right)$, we have $D(P\|P_{\beta})=-H(P)-\sum_iP(i)\ln P_{\beta}(i)=H(P_{\beta})-H(P)$. Then
nonnegativity of the relative entropy $D(P\|P_{\beta})$, with equality iff $P=P_{\beta}$ implies that $P_{\beta}$ maximizes the entropy over $\Delta(E)$. Using the Pinsker's inequality \cite{CoverThomas2006} and $\|x\|_1\ge\|x\|_2$, we get
\begin{align}
H(P_{\beta})-H(P)\  \ge\ \tfrac12\norm{P-P_{\beta}}_2^2 .
\end{align}

Set
\begin{align*}
E_{\mathrm{avg}}:=\sum_{i=0}^{d-1}P_\beta(i)E_i=\frac{E}{N},\quad
\sigma^2:=\sum_{i=0}^{d-1}P_\beta(i)(E_i-E_{\mathrm{avg}})^2,\quad
\bar{\varepsilon}:=\frac1d\sum_{i=0}^{d-1}E_i,\quad
s:=\min_{0\le i\le d-2}\left(E_{i+1}-E_i\right).
\end{align*}
Note that  $\left(\bar{\varepsilon}-E_0\right)+\left(E_{d-1}-\bar{\varepsilon}\right)=E_{d-1}-E_0$ and since $E_0\leq \bar{\varepsilon}\leq E_{d-1}$, we have $\max\{\bar{\varepsilon}-E_0,E_{d-1}-\bar{\varepsilon}\}\geq (E_{d-1}-E_0)/2$. Define the spectral ratio $\kappa$ as
\begin{align}
\label{eq:kappa-def}
\kappa:=\frac{\max\left\{\bar{\varepsilon}-E_0,\;E_{d-1}-\bar{\varepsilon}\right\}}{s}.
\end{align}
Moreover, since $E_{d-1}-E_0=\sum_{i=0}^{d-2}\left(E_{i+1}-E_i\right)\geq(d-1)s$, we have $\kappa\geq(d-1)/2$. Equality forces both $E_{d-1}-E_0=(d-1)s$, i.e.\ all gaps equal, and $\bar{\varepsilon}=(E_0+E_{d-1})/2$, both of which hold precisely for an equidistant spectrum.

We need to upper bound $\mathsf{norm2}=\tnorm{\sum_{P\in \mc{S}(E)} \mu_E(P)\,\eta_P^{\otimes k}-
\tau{\scriptstyle(E/N)}^{\otimes k}}$. But before that let us recall the Poisson summation formula.
For a sufficiently regular function $f:\mathbb R^n\to\mathbb C$, the Poisson summation formula over a lattice reads as
\begin{align}
\sum_{\pmb{k}\in\mathbb Z^n-\pmb{a}} f(\pmb{k})=\sum_{\pmb{m}\in\mathbb Z^n}
e^{-2\pi i \pmb{m}\cdot \pmb{a}}\,\hat f(2\pi \pmb{m}),
\end{align}
where $\hat f(\xi)=\int_{\mathbb R^n} f(\pmb{x})e^{-i\xi\cdot \pmb{x}}\,d\pmb{x}$. In our setting, we will require $n=d-2$ and $\pmb{a}=\pmb{k}_\beta$ (see proof of  Lemma \ref{le:lemma1:appendix}), where $\pmb{k}\in\mathbb{Z}^{d-2}$. Let us define $i$th row of some matrix $BJ$ (see proof of  Lemma \ref{le:lemma1:appendix}) as $b_i^T$, i.e. $b_i^T=e_i^T BJ$, where $e_i$ is vector with all zeros except at $i$th position and $1$ at $i$th position. Consider following functions 
\begin{align*}
    &g_1(\pmb{k})=\e^{-\frac{1}{2N}\pmb{k}^T\wt{K} \pmb{k}},\,\,\,g_2(\pmb{k})=\e^{-\frac{1}{2N}\pmb{k}^T\wt{K} \pmb{k}}(b_i^T{\pmb k}),\,\,\,g_3(\pmb{k})=\e^{-\frac{1}{2N}\pmb{k}^T\wt{K} \pmb{k}}(b_i^T{\pmb k})(b_j^T{\pmb k}),\\
    &\text{and}~~ g_4(\pmb{k})=\e^{-\frac{1}{2N}\pmb{k}^T\wt{K} \pmb{k}}(b_i^T{\pmb k})(b_j^T{\pmb k}) (b_l^T{\pmb k}) (b_m^T{\pmb k}).
\end{align*}
where $\wt{K}$ is a $(d-2)\times (d-2)$ positive definite matrix. Then from the Poisson summation formula, for each $i=1,2,3,4$, we have 
\begin{align}
\sum_{\pmb{k}\in\mathbb Z^{d-2}-\pmb{k}_\beta} g_i(\pmb{k})=\sum_{\pmb{m}\in\mathbb Z^{d-2}}
e^{-2\pi i \pmb{m}\cdot \pmb{k}_\beta}\,\hat g_i(2\pi \pmb{m}).
\end{align}
A direct computation gives the value of $\hat g_1(\xi)$ as
\begin{align*}
\hat g_1(\xi)=\int_{\mathbb R^{d-2}}
\e^{-\frac{1}{2N}\pmb{x}^T\wt{K} \pmb{x}}\e^{-i\xi\cdot\pmb{x}}\,d\pmb{x}= \sqrt{\frac{(2\pi N)^{d-2}}{\det\wt{K}}}
\exp\!\left(-\frac{N}{2}\xi^T\wt{K}^{-1}\xi\right).
\end{align*}
Further $\hat g_2(\xi) = i\sum_{a=0}^{d-3}(b_i^T)_a \frac{\partial}{\partial\xi_a}\hat{g}_1(\xi)$, $\hat g_3(\xi) = -\sum_{a,a'=0}^{d-3} (b_i^T)_a (b_j^T)_{a'}\frac{\partial^2}{\partial\xi_a\xi_{a'}}\hat{g}_1(\xi)$, and
\begin{align*}
 \hat g_4(\xi)=\sum_{a,b,c,e}(b_i^T)_a(b_j^T)_b(b_l^T)_c(b_m^T)_e\,\frac{\partial^4}{\partial\xi_a\partial\xi_b\partial\xi_c\partial\xi_e}\hat g_1(\xi).   
\end{align*}
For $\pmb{m}\neq 0$, we see that, for each $i=1,2,3,4$, $\hat g_i(2\pi\pmb{m})$ behaves as $O\left(\e^{-cN}\right)$ for some positive constant $c$. That is only $m=0$ or $\xi=0$ terms contribute. Using the fact that $\wt{K}$ is a positive definite matrix, we have 
\begin{align}
\label{eq:zeroth}
&\sum_{\pmb{k}\in\mathbb Z^{d-2}-\pmb{k}_\beta} \e^{-\frac{1}{2N}\pmb{k}^T\wt{K} \pmb{k}}=\sqrt{\frac{(2\pi N)^{d-2}}{\det\wt{K}}}+O\left(\e^{-cN}\right),\\
&\sum_{\pmb{k}\in\mathbb Z^{d-2}-\pmb{k}_\beta} \e^{-\frac{1}{2N}\pmb{k}^T\wt{K} \pmb{k}} (b_i^T\pmb{k})=0+O\left(\e^{-cN}\right), \label{eq:first} \\
&\sum_{\pmb{k}\in\mathbb Z^{d-2}-\pmb{k}_\beta} \e^{-\frac{1}{2N}\pmb{k}^T\wt{K} \pmb{k}} (b_i^T\pmb{k})(b_j^T\pmb{k})=N\sqrt{\frac{(2\pi N)^{d-2}}{\det\wt{K}}}
\,\mathsf{Mat}_{ij}+O\left(\e^{-cN}\right),\label{eq:second}\\
&\sum_{\pmb{k}\in\mathbb Z^{d-2}-\pmb{k}_\beta}\e^{-\frac{1}{2N}\pmb{k}^T\wt{K}\pmb{k}}
(b_i^T\pmb k)(b_j^T\pmb k)(b_l^T\pmb k)(b_m^T\pmb k)
= N^2\sqrt{\frac{(2\pi N)^{d-2}}{\det\wt K}}
\left[\mathsf{Mat}_{ij}\mathsf{Mat}_{lm} + \mathsf{Mat}_{il}\mathsf{Mat}_{jm} \right.\nonumber\\
&\left.\qquad\qquad\qquad\qquad\qquad\qquad\qquad\qquad\qquad\qquad\qquad\qquad~~~~~+ \mathsf{Mat}_{im}\mathsf{Mat}_{jl} 
\right]
+ O\!\left(\e^{-cN}\right), \label{eq:fourth}
\end{align}
where $\mathsf{Mat}_{ij}:=b_i^T\wt{K}^{-1}b_j$.

Now we are ready to upper bound $\mathsf{norm2}=\tnorm{\sum_{P\in \mc{S}(E)} \mu_E(P)\,\eta_P^{\otimes k}-
\tau{\scriptstyle(E/N)}^{\otimes k}}$ as follows. Recall that for a system of $N$ qudits, $\hat P_E:=\sum_{P\in \mc{S}(E)}\sum_{\Bx\in T_P}\ket{\Bx}\bra{\Bx}$. For any state $\rho^{(N)}$ let
\begin{align}
\label{eq:pmin-append}
p_{\min}\left(\rho^{(N)}\right):=
\min_{\substack{E\in\mathrm{Spec}(H)\\ \hat P_E\rho^{(N)}\hat P_E\neq 0}}\ \
\min_{0\le j\le d-1} P_{\beta_{E/N}}(j),
\end{align}
where $P_{\beta_{E/N}}$ is the thermal distribution at the inverse temperature $\beta_{E/N}$ determined by the equation $\sum_{i=0}^{d-1}E_i\e^{-\beta_{E/N}E_i}=(E/N)\sum_{i=0}^{d-1}\e^{-\beta_{E/N}E_i}$.  We write $p_{\min}\left(\rho^{(N)}\right)$ simply as $p_{\min}$, and treat it as a fixed constant. All implied constants below, and the thresholds $N_{0}$ beyond which the estimates
hold, depend only on $\{E_i\}_{i=0}^{d-1}$, on $p_{\min}$ and on $\delta$. They do not depend on $E$ or $N$.

\begin{remark}
\label{re:zero-energy}
The condition $p_{\min}\left(\rho^
{(N)}\right) > 0$
holds if and only if $E_0 < \frac{E}{N} < E_{d-1}$, i.e., when the total energy $E$ is not at the extremal values $E = N E_0$ or $E = N E_{d-1}$. Equivalently, this corresponds to finite temperature, for which the Gibbs
distribution has full support.
\end{remark}

\begin{lemma}
\label{le:lemma1:appendix}
Let $P_{\beta}\in \Delta(E)$ be the unique maximizer of the Shannon entropy $H(P)=-\sum_{i=0}^{d-1}P(i)\ln P(i)$ over $\Delta(E)$, and assume $\min_{0\le j\le d-1} P_\beta(j) \ge p_{\min} > 0$. Then
\begin{align}
   \tnorm{\sum_{P\in \mc{S}(E)} \mu_E(P)\,\eta_P^{\otimes k}-\tau{\scriptstyle(E/N)}^{\otimes k}}\leq \frac{k(8\kappa+d(k-1)-2k+6)}{2N}+ O\left(N^{-3/2+c\delta}\right),
\end{align}
where $c>0$ is some $d$-dependent constant and $\delta\in \left(0,1/2c\right)$. And for equidistant spectrum, $\kappa=(d-1)/2$,
\begin{align}
\tnorm{\sum_{P\in \mc{S}(E)} \mu_E(P)\,\eta_P^{\otimes k}-
\tau{\scriptstyle(E/N)}^{\otimes k}}
&\leq  \frac{k(d(k+3)-2(k-1))}{2N}+ O\left(N^{-3/2+c\delta}\right).
\end{align}
\end{lemma}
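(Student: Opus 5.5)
The plan is to treat $\mu_E$ as a discrete, approximately Gaussian distribution on the affine lattice of types in $\Delta(E)$, centred at $P_\beta$. Then $\sum_P\mu_E(P)\eta_P^{\otimes k}$ can be compared with $\tau(E/N)^{\otimes k}=\eta_{P_\beta}^{\otimes k}$ through a second-order expansion in $P-P_\beta$.

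\textbf{Lattice parametrization.} The two constraints defining $\Delta(E)$ leave a $(d-2)$-dimensional affine space. Commensurability makes $\{l\in\mathbb{Z}^d: \sum_i l_i=0,\ \sum_i q_i l_i=0\}$ a rank-$(d-2)$ lattice. I will write every type in $\mathcal{S}(E)$ as
\begin{align*}
P=P_\beta+\frac{1}{N}BJ\pmb{k}, \qquad \pmb{k}\in\mathbb{Z}^{d-2}-\pmb{k}_\beta ,
\end{align*}
where the shift $\pmb{k}_\beta$ accounts for $P_\beta$ not being a type.

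\textbf{Laplace/Stirling step.} Stirling gives
\begin{align*}
|T_P|=\exp\!\big(NH(P)-\tfrac12\sum_i\ln P(i)+\text{const}+O(1/N)\big).
\end{align*}
Because $P_\beta$ maximizes $H$ on $\Delta(E)$, the linear term in $\pmb{k}$ vanishes. The Hessian $-\mathrm{diag}(1/P_\beta)$, restricted to the lattice directions, yields
\begin{align*}
\mu_E(P)\propto e^{-\frac{1}{2N}\pmb{k}^T\wt K\pmb{k}}\big(1+O(|\pmb{k}|/N+|\pmb{k}|^3/N^2)\big),
\qquad \wt K=(BJ)^T\mathrm{diag}(1/P_\beta)BJ .
\end{align*}
The lower bound $p_{\min}>0$ guarantees that $\wt K$ is positive definite with $N$-independent constants. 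I will truncate to $|\pmb{k}|\le N^{1/2+\delta}$. The tail is controlled by the entropy gap $H(P_\beta)-H(P)\ge\frac12\|P-P_\beta\|_2^2$ together with polynomially many types, which makes it superpolynomially small. The $N^{c\delta}$ losses come from the correction terms evaluated on this window.

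\textbf{Moment expansion of the product.} Set $\epsilon=(P-P_\beta)$ with components $\epsilon_i=(BJ\pmb{k})_i/N=b_i^T\pmb{k}/N$. Then
\begin{align*}
\eta_P^{\otimes k}=\bigotimes\big(\tau+\mathrm{diag}(\epsilon)\big).
\end{align*}
Expanding to second order, the first-order term is $\sum_{\text{sites}}\mathrm{diag}(\epsilon)\otimes\tau^{\otimes(k-1)}$. Its $\mu_E$-average is $O(N^{-3/2+c\delta})$: the leading Gaussian first moment \eqref{eq:first} vanishes, and the residual comes only from the $O(|\pmb{k}|/N)$ correction to $\mu_E$. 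Here I will need the sums to be normalized by \eqref{eq:zeroth}. The second-order terms have trace norm at most
\begin{align*}
\binom{k}{2}\big(\mathbb{E}\|\epsilon\|_1\big)^2 + k\,\big|\mathbb{E}\,\epsilon\big|,
\end{align*}
and by \eqref{eq:second} $\mathbb{E}[\epsilon_i\epsilon_j]=\mathsf{Mat}_{ij}/N$. Remainders of third order and higher use \eqref{eq:fourth} together with Cauchy–Schwarz and are $O(N^{-3/2})$.

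\textbf{Explicit constants (main obstacle).} The hard part is getting the explicit constant $k(8\kappa+d(k-1)-2k+6)/(2N)$ rather than merely $O(k^2/N)$. This requires bounding $\sum_{ij}|\mathsf{Mat}_{ij}|$ uniformly. I will identify $\mathsf{Mat}$ as the covariance of the multinomial conditioned on energy:
\begin{align*}
\mathrm{diag}(P_\beta)-P_\beta P_\beta^T-\frac{(P_\beta\circ(E-E_{\mathrm{avg}}))(P_\beta\circ(E-E_{\mathrm{avg}}))^T}{\sigma^2}.
\end{align*}
Its $\ell_1$-type norms can be bounded by $2-2\sum_iP_\beta(i)^2$ plus a term involving $\max_i|E_i-E_{\mathrm{avg}}|/\sigma$, which in turn is controlled by $\kappa$ via the minimal gap $s$. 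The counting $d(k-1)-2k$ should come from the number of pairs of sites times the rank-$(d-2)$ trace bound.

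Finally, I will substitute $\kappa=(d-1)/2$ in the equidistant case to obtain the second display.
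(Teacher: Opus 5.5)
\textbf{Gap: the first-order term is dropped, and it carries the $\kappa$-dependent part of the constant.} Your lattice parametrization, Stirling/Laplace step and truncation agree with the paper's. The failing step is the claim that $\sum_P\mu_E(P)\,\epsilon_i=O(N^{-3/2+c\delta})$ because the Gaussian first moment \eqref{eq:first} vanishes.

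The non-Gaussian corrections to the weight have size $N^{-1/2}$ on the bulk $|\pmb{k}|\sim\sqrt N$, and both are odd in $\pmb k$. One is $-\tfrac12\sum_j\epsilon_j/P_\beta(j)$, coming from $\prod_jP(j)^{-1/2}$. The other is $\tfrac N6\sum_j\epsilon_j^3/P_\beta(j)^2$, coming from the cubic Taylor remainder $NR_3$. Multiplied by $\epsilon_i\sim N^{-1/2}$, each gives an even integrand whose average is of order $1/N$, not $N^{-3/2}$. Using \eqref{eq:second} and \eqref{eq:fourth} one gets
\[
\sum_{P\in\mathcal{S}(E)}\mu_E(P)\,\epsilon_i=-\frac{1}{2N}\sum_{j}\frac{\mathsf{Mat}_{ij}}{P_\beta(j)}+\frac{1}{2N}\sum_{j}\frac{\mathsf{Mat}_{ij}\,\mathsf{Mat}_{jj}}{P_\beta(j)^{2}}+O(N^{-3/2+c\delta})=-\frac{1}{2N}(\mathsf{Proj}_K\,q)_i+O(N^{-3/2+c\delta}).
\]
Here $\mathsf{Proj}_K=\mathsf{Mat}\,K$ is the $K$-orthogonal projector onto $T$, and $q_j=1-(\mathsf{Proj}_K)_{jj}$.

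This is generically nonzero. Take $d=3$, levels $0,1,2$, and $E=N$. Then exactly $\sum_P\mu_E(P)P(1)=T_{N-1}/T_N=\frac13+\frac1{6N}+O(N^{-2})$, where $T_N$ are the central trinomial coefficients. So already for $k=1$ the trace distance is $\frac{1}{3N}+O(N^{-2})$, which contradicts your intermediate claim. In this example $P_\beta$ is uniform, so $\mathsf{Proj}_K\mathbf 1=0$ and the prefactor contribution vanishes. The whole shift comes from the cubic term, which your sketch omits entirely: you say the residual comes only from the $O(|\pmb k|/N)$ correction.

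\textbf{Consequences for the constant bookkeeping.} The target splits as
\[
\frac{k(8\kappa+d(k-1)-2k+6)}{2N}=\frac{k(4+8\kappa)}{2N}+\frac{k(k-1)(d-2)}{2N}.
\]
The $k$-linear piece is precisely $k\sum_i|\mathbb{E}\,\epsilon_i|=\frac{k}{2N}\|\mathsf{Proj}_Kq\|_1$, combined with the bound $\|\mathsf{Proj}_Kq\|_1\le 4+8\kappa$. That bound uses $q_j\in[0,1]$ and $\sum_jq_j=2$. It writes $q=\mathsf{Proj}_Kq+z$ with $z_i=P_\beta(i)(\alpha+\gamma(E_i-E_{\mathrm{avg}}))$, and uses $\sum_iP_\beta(i)|E_i-E_{\mathrm{avg}}|\le 2\sigma^2/s$. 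This is where $\kappa$ enters.

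$\kappa$ does not enter the second-moment term. In your (correct) conditioned-covariance formula for $\mathsf{Mat}$, the energy part has entrywise $\ell_1$ norm $(\sum_iP_\beta(i)|E_i-E_{\mathrm{avg}}|)^2/\sigma^2\le 1$ by Cauchy--Schwarz, with no $\kappa$.

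To get the exact coefficient $d-2$ you need, using $\mathsf{Mat}\ge0$ and Cauchy--Schwarz,
\[
\sum_{ij}|\mathsf{Mat}_{ij}|\le\Big(\sum_i\sqrt{\mathsf{Mat}_{ii}}\Big)^2\le\operatorname{tr}(\mathsf{Mat}\,K)=\operatorname{tr}\mathsf{Proj}_K=d-2.
\]
The triangle inequality on your explicit formula gives only $3-2\sum_iP_\beta(i)^2$. That can exceed $d-2$ for $d=3,4$, although it is a genuine improvement for $d\ge5$.

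Finally, $\binom k2(\mathbb{E}\|\epsilon\|_1)^2$ is not an upper bound on the second-order term, since Jensen goes the wrong way. For $d=3$ it undercounts by a factor $2/\pi$. The correct quantity is $\binom k2\sum_{ij}|\mathbb{E}[\epsilon_i\epsilon_j]|$.
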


\begin{proof}
Let $L$ be the following set of nonnegative integer vectors
\begin{align}
  L=\left\{\pmb{l}\in\mathbb{Z}_{\geq 0}^d:\sum_{i=0}^{d-1}l_i=N, ~ \sum_{i=0}^{d-1}l_iE_i=E\right\}.
\end{align} 
Then $\mc{S}(E)=L/N$. For $P\in \mc{S}(E)$, we have $\eta_P = \sum_{i=0}^{d-1} P(i)\, |E_i\rangle\!\langle E_i|$. The $k$-fold tensor power of $\eta_P$ can be written as
\begin{align*}
\eta_P^{\otimes k}=\sum_{i_1,\dots,i_k}P(i_1)\cdots P(i_k)
\,|E_{i_1},\dots,E_{i_k}\rangle\langle E_{i_1},\dots,E_{i_k}|.
\end{align*}
Expanding $\eta_P^{\otimes k}$, we obtain $\eta_P^{\otimes k} = \tau{\scriptstyle(E/N)}^{\otimes k}+T_1(P)+T_2(P)+O\left(\norm{P-P_\beta}_2^3\right)$, where $T_1(P)=\sum_{r=1}^{k}\sum_{i=0}^{d-1}(P(i)-P_\beta(i))|E_i\rangle\!\langle E_i|_r\otimes
\eta_{P_\beta}^{\otimes (k-1)}$  and $T_2(P)=\sum_{1\le r<s\le k}\sum_{i,j=0}^{d-1}\,(P(i)-P_\beta(i))(P(j)-P_\beta(j))|E_i\rangle\!\langle E_i|_r
\otimes|E_j\rangle\!\langle E_j|_s\otimes\eta_{P_\beta}^{\otimes (k-2)}$. $\tau{\scriptstyle(E/N)}$ is the thermal state determined by the energy constraint $E$ and is an energy diagonal state with probability distribution given by $P_\beta\in \Delta(E)$.  Then
\begin{align}
\label{eq:all-terms-trace}
\tnorm{\sum_{P\in \mc{S}(E)} \mu_E(P)\,\eta_P^{\otimes k}-
\tau{\scriptstyle(E/N)}^{\otimes k}}\leq \tnorm{\sum_{P\in \mc{S}(E)} \mu_E(P)\,T_1(P)} + \tnorm{\sum_{P\in \mc{S}(E)} \mu_E(P)\,T_2(P)} + \text{higher order terms}.
\end{align}
In particular,
\begin{align}
\label{eq:T1}
&\tnorm{\sum_{P\in \mc{S}(E)} \mu_E(P)\,T_1(P)} \leq k\sum_{i=0}^{d-1}\left|\sum_{P\in \mc{S}(E)} \mu_E(P)\,(P(i)-P_\beta(i))\right|,\\
&\tnorm{\sum_{P\in \mc{S}(E)} \mu_E(P)\,T_2(P)} \leq \binom{k}{2}\sum_{i,j=0}^{d-1}\left|\sum_{P\in \mc{S}(E)} \mu_E(P)\,(P(i)-P_\beta(i))(P(j)-P_\beta(j))\right|.\label{eq:T2}
\end{align}

Now we will compute sums in Eqs. \eqref{eq:T1} and \eqref{eq:T2}. Recall that $\mu_E(P)=\binom{N}{NP}/\left(\sum_{P\in \mc{S}(E)\binom{N}{NP}}\right)$. Applying Robbins' refinement of Stirling's approximation \cite{Robbins1955}, 
\begin{align*}
N! = N^N \e^{-N} \sqrt{2\pi N} \e^{\theta_N}, \quad \text{where } \frac{1}{12N + 1} < \theta_N < \frac{1}{12N}, 
\end{align*}
to $\ln \binom{N}{NP}= \ln N! -\sum_i\ln (NP(i))!$, and using $P(j)>0$ for all $j$, we get
\begin{align*}
\ln \binom{N}{NP} 
&=N H(P)+ \ln \left( \prod_j P(j)^{-1/2} \right)-\frac{(d-1)}{2}\ln (2\pi N) + \theta_N - \sum_j \theta_{N P(j)}.
\end{align*}
Then
\begin{align*}
\mu_E(P) &= \frac{\e^{NH(P)}\e^{- \sum_j \theta_{N P(j)}} \prod_j P(j)^{-1/2}}{\sum_{Q\in \mathcal{S}(E)}\e^{NH(Q)}\e^{- \sum_j \theta_{N Q(j)}} \prod_j Q(j)^{-1/2} }.
\end{align*}
Using the fact that $\frac{1}{12N + 1} < \theta_N < \frac{1}{12N}$, we have 
\begin{align*}
\mu_E(P) 
&= \frac{\e^{NH(P)} \prod_j P(j)^{-1/2}}{\sum_{Q\in \mathcal{S}(E)}\e^{NH(Q)} \prod_j Q(j)^{-1/2} }\left(1+O(1/N)\right).
\end{align*}
Thus, computation of Eqs. \eqref{eq:T1} and \eqref{eq:T2} reduces to computing sums of the form $F_N=   \sum_{P\in \mc{S}(E)}\e^{NH(P)} F(P)\left(1+O(1/N)\right)$, where $F$ is some polynomial on $\mc{S}(E)$ of the form $(P(i)-P_\beta(i))^r(P(j)-P_\beta(j))^s$ with $r,s\geq 0$.  Define
\begin{align*}
    &\mc{B}_{\delta}:=\left\{\pmb{x}\in \mathbb{R}^d\,|\,\norm{\pmb{x} - P_\beta}_2\leq N^{-1/2+\delta}\right\},
\end{align*}
where $0<\delta<1/2$. For $P\in \mc{S}(E)$, 
\begin{align*}
   H(P_{\beta})-H(P)\geq \frac{1}{2}\,\norm{P-P_{\beta}}_2^2.
\end{align*}
That is, for $P\notin \mc{S}(E)\cap \mc{B}_{\delta}$, we have $H(P_{\beta})-H(P)\geq \frac{1}{2}\,N^{-1+2\delta}$. Then $\e^{NH(P)}\lesssim \,\e^{NH(P_\beta)}\e^{-N^{2\delta}/2}$. This means outside $\mc{S}(E)\cap \mc{B}_{\delta}$, the terms $\e^{NH(P)} \,F(P)$ are exponentially suppressed. Thus, 
\begin{align}
\label{eq:dropping-outside-eq}
&F_N=   \left(\sum_{P\in \mc{S}(E)\cap \mc{B}_{\delta}}\e^{NH(P)} F(P) + O\left(\e^{-N^{2\delta}/2}\right)\right)\left(1+O(1/N)\right).
\end{align}
In the following, we will drop the terms of order $O\left(\e^{-N^{2\delta}/2}\right)$.  Using Taylor's theorem with Lagrange form of the remainder, we have $H(P) = H(P_\beta) - M(P) + R_3(P)$ with $M(P) = \frac{1}{2} (P - P_\beta)^TK(P-P_\beta)$, $K=\diag\left(1/P_\beta(0),\cdots,1/P_\beta(d-1)\right)$, and 
\begin{align*}
R_3(P) = \frac{1}{6} \sum_j \frac{(P(j) - P_\beta(j))^3}{\left(P_\beta(j)+t(P(j)-P_{\beta}(j))\right)^2},
\end{align*}
where $t\in(0,1)$ and the Taylor expansion is restricted to the $(d-2)$-dimensional affine subspace. 
For $P\in\mc{S}(E)\cap\mc{B}_\delta$, we have $|P(j) - P_\beta(j)|\leq N^{-1/2+\delta}$ for all $j$.  Further, since $P_\beta(j)\geq p_{\min}>0$, we have $P_\beta(j)+ t(P(j)-P_\beta(j))\geq p_{\min}-N^{-1/2+\delta}$ for $P\in\mc{S}(E)\cap\mc{B}_\delta$. Then for $P\in\mc{S}(E)\cap\mc{B}_\delta$, we have
\begin{align*}
   &-M(P) \geq -\frac{N^{-1+2\delta}}{2\,p_{\min}}\quad\text{and}\quad |R_3(P)| \le  \frac{d\,N^{-3/2+3\delta}}{6 (p_{\min} - N^{-1/2+\delta})^2}.
\end{align*}
Thus, $NM(P)=O\left(N^{2\delta}\right)$ while  $N|R_3(P)|=O\left(N^{-1/2+3\delta}\right)$. Using this, we have
\begin{align}
&F_N=   \e^{NH(P_\beta)}\sum_{P\in \frac{L}{N}\cap \mc{B}_{\delta}}\e^{-\frac{N}{2}(P-P_\beta)^TK(P-P_\beta)} F(P) \left(1+O\left(N^{-1/2+3\delta}\right)\right).
\end{align}
For $P\in \frac{L}{N} \cap\mc{B}_\delta$, $\pmb{u}=NP-NP_\beta=\pmb{l}-NP_\beta$ satisfies $\norm{\pmb{u}}\leq N^{1/2+\delta}$, where $\pmb{l}\in \mathbb{Z}_{\geq 0}^d$. Thus, $|u_i|\leq N^{1/2+\delta}$ or $l_i$ satisfies $NP_{\beta}(i)-N^{1/2+\delta} \leq l_i\leq NP_{\beta}(i)+N^{1/2+\delta}$. But since $P_\beta(i)\ge p_{\min}>0$, there exists $N_{0}$, depending only on $p_{\min}$ and
$\delta$, such that for all $N\ge N_{0}$ we have $P_\beta(i)\ge N^{-1/2+\delta}$ for all $i$. Then $NP_{\beta}(i)-N^{1/2+\delta} \geq 0$ for all $i$. Thus inside $\mc{B}_\delta$, $\pmb{u}$ can be taken to be inside $\mathbb{Z}^d-NP_\beta$ instead of $\mathbb{Z}_{\geq 0}^d-NP_\beta$ together with the constraints that $\sum_{i=0}^{d-1}u_i=0$ and $\sum_{i=0}^{d-1}u_iE_i=0$. Thus, we have
\begin{align*}
F_N&=  \e^{NH(P_\beta)} \sum_{\substack{\pmb{u}\in \mathbb{Z}^d-NP_\beta\\ \sum_iu_i=0, \sum_i u_i E_i=0\\ \norm{\pmb{u}}\leq N^{1/2+\delta}}} \e^{-\frac{1}{2N}\pmb{u}^TK\pmb{u}} F\left(\frac{\pmb{u}}{N}+P_\beta\right) \left(1+O\left(N^{-1/2+3\delta}\right)\right).
\end{align*}
Now let us fix the last two coordinates of $\pmb{u}$, namely $u_{d-2}$ and $u_{d-1}$ using the constraints. That is, $u_{d-2}+u_{d-1}=-\sum_{i=0}^{d-3}u_i$ and $E_{d-2}u_{d-2}+E_{d-1}u_{d-1}=-\sum_{i=0}^{d-3}E_iu_i$. For $\pmb{w}\in \mathbb{R}^{d-2}$, let us define $\pmb{u}=B\pmb{w}$, where $B$ is a $d\times (d-2)$ given by
\begin{align}
\label{eq:commensurate-B}
    B=\begin{pmatrix}
        1 & 0 & \cdots & 0\\
        0 & 1 & \cdots & 0\\
        \vdots & \vdots & \ddots  & \vdots\\
        0 & 0 & \cdots& 1\\
        \frac{-E_{d-1} + E_0}{E_{d-1} - E_{d-2}} & \frac{-E_{d-1}  + E_1}{E_{d-1} - E_{d-2}}  &\cdots & \frac{-E_{d-1}  + E_{d-3}}{E_{d-1} - E_{d-2}} \\
        \frac{E_{d-2}  - E_0}{E_{d-1} - E_{d-2}} & \frac{E_{d-2} - E_1}{E_{d-1} - E_{d-2}}  & \cdots& \frac{E_{d-2}  - E_{d-3}}{E_{d-1} - E_{d-2}}
    \end{pmatrix}.
\end{align}
Let 
\begin{align}
\label{eq:eq-T-append}
    T=\left\{x\in\mathbb{R}^d: \sum_ix_i=0, \sum_i E_ix_i=0\right\}.
\end{align}
Then it is easy to see that $B\pmb{w} \in T$. Let
$\Lambda = \{\pmb {w} \in \mathbb{R}^{d-2} : B{\pmb w} \in \mathbb{Z}^{d}\}$ and consider
\begin{align*}
  W := \{\, {\pmb w} \in \mathbb{R}^{d-2} \ :\ B{\pmb w} + NP_\beta \in \mathbb{Z}^{d} \,\}.  
\end{align*}
That $W$ is nonempty can be seen as follows. Fix any ${\pmb l}_{0} \in L$, then ${\pmb l}_{0} - NP_\beta \in T = \mathrm{col}(B)$, and so $B {\pmb w}_{0} = {\pmb l}_{0} - NP_\beta$ admits a solution ${\pmb w}_{0}$, which is unique as $B$ is injective, and ${\pmb w}_{0} \in W$. If ${\pmb w}, {\pmb w}' \in W$ then $B({\pmb w} - {\pmb w}') \in \mathbb{Z}^{d}$, i.e. ${\pmb w} - {\pmb w}' \in \Lambda$.
Hence $W = {\pmb w}_{0} + \Lambda$, and setting ${\pmb w}_\beta := -{\pmb w}_{0}$, the constraint
${\pmb u} + NP_\beta \in \mathbb{Z}^{d}$ with ${\pmb u} = B{\pmb w} \in T$ is equivalent to ${\pmb w} \in \Lambda - {\pmb w}_\beta$. The particular choice of ${\pmb l}_{0}$ is immaterial: a different choice shifts ${\pmb w}_\beta$ by an
element of $\Lambda$, leaving the sum unchanged. Using these facts, we have
\begin{align*}
F_N&=  \e^{NH(P_\beta)} \sum_{\substack{\pmb{w}\in \Lambda - \pmb{w}_\beta\\ \norm{B\pmb{w}}\leq N^{1/2+\delta}}} \e^{-\frac{1}{2N}\pmb{w}^TK'\pmb{w}} F\left(\frac{B\pmb{w}}{N}+P_\beta\right) \left(1+O\left(N^{-1/2+3\delta}\right)\right),
\end{align*}
where $K'=B^TK B$. Let $J$ be an invertible matrix such that $\Lambda=J\mathbb{Z}^{d-2}$. Let $\pmb{k}=J^{-1}\pmb{w}$ and $\pmb{k}_\beta=J^{-1}\pmb{w}_\beta$. Then
\begin{align}
F_N&=  \e^{NH(P_\beta)} \sum_{\substack{\pmb{k}\in \mathbb{Z}^{d-2} - \pmb{k}_\beta\\ \norm{BJ\pmb{k}}\leq N^{1/2+\delta}}} \e^{-\frac{1}{2N}\pmb{k}^T\wt{K}\pmb{k}} F\left(\frac{BJ\pmb{k}}{N}+P_\beta\right)\left(1+O\left(N^{-1/2+3\delta}\right)\right),
\end{align}
where $\wt{K}=J^TB^TK BJ$. Since $BJ$ is not dependent on $N$ and is an injective map from $\mathbb{R}^{d-2}\to\mathbb{R}^d$, there exist constants
$c_1>0$ and $c_2> 0$ such that $c_1\|\pmb{k}\|\leq\|BJ\pmb{k}\|\le c_2\|\pmb{k}\|$. Thus the condition $\|BJ\pmb{k}\|\le N^{1/2+\delta}$ is equivalent to $\|\pmb{k}\|\le C N^{1/2+\delta}$ for some constant $C>0$. Because $\wt{K}$ is positive definite, there exists $c>0$ with ${\pmb k}^T\wt{K} {\pmb k}\ge c\|\pmb{k}\|^2$. Splitting the exponent, for $\|\pmb{k}\|>CN^{1/2+\delta}$ we have $e^{-\frac{1}{2N}{\pmb k}^T\wt{K}{\pmb k}}\le e^{-\frac{c}{4N}\|{\pmb k}\|^2}e^{-\frac{c}{4}C^2N^{2\delta}}$,
while $F(BJ{\pmb k}/N+P_\beta)$ is polynomially bounded in $\|{\pmb k}\|$. Since
$\sum_{\pmb k}\|\pmb{k}\|^{m}e^{-\frac{c}{4N}\|{\pmb k}\|^{2}}=O\bigl(N^{(d-2+m)/2}\bigr)$ for every fixed $m\ge0$,
the tail contributes $O\bigl(e^{-c'N^{2\delta}}\bigr)$ for some $c'>0$, and the restriction to the ball can be removed with exponentially small error. We then have 
\begin{align}
\label{eq:FN-final}
F_N&=  \e^{NH(P_\beta)} \sum_{\pmb{k}\in \mathbb{Z}^{d-2} - \pmb{k}_\beta} \e^{-\frac{1}{2N}\pmb{k}^T\wt{K}\pmb{k}} F\left(\frac{BJ\pmb{k}}{N}+P_\beta\right)\left(1+O\left(N^{-1/2+3\delta}\right)\right).
\end{align}

We first compute the leading contributions in the expression $\sum_{P\in \mc{S}(E)} \mu_E(P)\,(P(i)-P_\beta(i))$. We have
\begin{align*}
 \sum_{P\in \mc{S}(E)} \mu_E(P)\,(P(i)-P_\beta(i)) = \frac{\sum_{P\in \mc{S}(E)}\e^{NH(P)} \prod_j P(j)^{-1/2}(P(i)-P_\beta(i))}{\sum_{Q\in \mathcal{S}(E)}\e^{NH(Q)} \prod_j Q(j)^{-1/2} }\left(1+O\left(1/N\right)\right).
\end{align*}
A remark is in order before evaluating this ratio. In the reduction leading to Eq. \eqref{eq:FN-final}, the cubic Taylor remainder was absorbed into a uniform relative factor, $\e^{NR_3(P)}=1+O\!\left(N^{-1/2+3\delta}\right)$, and pulled outside the sum. This is legitimate whenever the Gaussian sum it multiplies is nonzero at leading order and then $NR_3$ does not contribute to the leading order. But when the Gaussian sum does vanish, $NR_3$ must be included. Since, inside $\mc{B}_\delta$, $|P(i)-P_\beta(i)|/P_{\beta}(i)\to0$ for large enough $N$, we can write $\prod_j P(j)^{-1/2} =\left(\prod_j P_\beta(j)^{-1/2} \right) \left(1-\frac{1}{2} \sum_j\frac{P(j)-P_\beta(j)}{P_\beta(j)}+O\left(\norm{\frac{(P-P_\beta)}{P_\beta}}_2^2\right)\right)$, and the weight carries in addition the factor $\e^{NR_3(P)}=1+\frac{N}{6}\sum_j\frac{(P(j)-P_\beta(j))^3}{P_\beta(j)^2}+O\!\left(N^{-1+6\delta}\right)$. Using this we obtain to the leading order
\begin{align*}
 &\sum_{P\in \mc{S}(E)} \mu_E(P)\,(P(i)-P_\beta(i)) \\
 &= \frac{\sum_{P\in \mc{S}(E)\cap\mc{B}_\delta}\e^{-\frac{N}{2}(P-P_\beta)^TK(P-P_\beta)} (P(i)-P_\beta(i))\left[1-\frac{1}{2} \sum_j\frac{P(j)-P_\beta(j)}{P_\beta(j)}+\frac{N}{6}\sum_j\frac{(P(j)-P_\beta(j))^3}{P_\beta(j)^2}\right]}{\sum_{Q\in \mathcal{S}(E)\cap\mc{B}_{\delta}}\e^{-\frac{N}{2}(Q-P_\beta)^TK(Q-P_\beta)} }+O\left(N^{-3/2+c\delta}\right).
\end{align*}
Using the coordinate transformation, $P\mapsto \pmb{k}\in\mathbb Z^{d-2}-\pmb{k}_\beta$, by noting $P=\pmb{u}/N+P_\beta = B\pmb{w}/N+P_\beta = BJ\pmb{k}/N+P_\beta$ and $\wt{K}=J^TB^TKBJ$, and using the fact that the linear term vanishes, we have
\begin{align}
\label{eq:chain-for-linear-terms}
 &\sum_{P\in \mc{S}(E)} \mu_E(P)\,(P(i)-P_\beta(i))\nonumber\\
 &= -\frac{1}{2N^2}\sum_jP_\beta(j)^{-1}\frac{\sum_{\pmb{k}} \e^{-\frac{1}{2N}\pmb{k}^T\wt{K} \pmb{k}} (b_i^T{\pmb k})(b_j^T{\pmb k})}{\sum_{\pmb{k}} \e^{-\frac{1}{2N}\pmb{k}^T\wt{K} \pmb{k}}}
 +\frac{1}{6N^3}\sum_jP_\beta(j)^{-2}\frac{\sum_{\pmb{k}} \e^{-\frac{1}{2N}\pmb{k}^T\wt{K} \pmb{k}} (b_i^T{\pmb k})(b_j^T{\pmb k})^3}{\sum_{\pmb{k}} \e^{-\frac{1}{2N}\pmb{k}^T\wt{K} \pmb{k}}}
 +O\left(N^{-3/2+c\delta}\right)\nonumber\\
 &= -\frac{1}{2N}\sum_jP_\beta(j)^{-1}\,\mathsf{Mat}_{ij}
 +\frac{1}{2N}\sum_jP_\beta(j)^{-2}\,\mathsf{Mat}_{ij}\mathsf{Mat}_{jj}
 +O\left(N^{-3/2+c\delta}\right)\nonumber\\
 &= -\frac{1}{2N}\sum_jP_\beta(j)^{-1}\left[1-P_\beta(j)^{-1}\,\mathsf{Mat}_{jj}\right]\mathsf{Mat}_{ij}
 +O\left(N^{-3/2+c\delta}\right),
\end{align}
where in the second line we used Eqs.~\eqref{eq:second} and  \eqref{eq:fourth}, and $\mathsf{Mat}_{ij}=b_i^T\wt{K}^{-1}b_j$. $c$ is a $d$-dependent constant and $\delta\in(0,1/2c)$. The remaining corrections are odd sums of degree $\ge 5$ or even sums entering at relative order $O(N^{-1})$, hence $O(N^{-3/2+c\delta})$. A careful analysis in Ref.~\cite{Lapinski2020} confirms that all such corrections are bounded by $O(N^{-3/2+c\delta})$, smaller than the leading $1/N$ term as long as $\delta\in(0,1/2c)$.

Similarly, using the coordinate transformation and Eq. \eqref{eq:second}, we obtain (to the leading order)
\begin{align}
\label{eq:chain-of-inequalities-quadratic}
 &\sum_{P\in \mc{S}(E)} \mu_E(P)\,(P(i)-P_\beta(i))(P(j)-P_\beta(j))\nonumber \\
 &= \frac{\sum_{P\in \mathcal{S}(E)\cap\mc{B}_{\delta} }\e^{-\frac{N}{2}(P-P_\beta)^TK(P-P_\beta)} \left((P(i)-P_\beta(i))(P(j)-P_\beta(j))\right)}{\sum_{Q\in \mathcal{S}(E)\cap\mc{B}_{\delta}}\e^{-\frac{N}{2}(Q-P_\beta)^TK(Q-P_\beta)} } +O\left(N^{-3/2+c\delta}\right)\nonumber \\
 &= \frac{1}{N^2}\frac{\sum_{\pmb{k}\in\mathbb Z^{d-2}-\pmb{k}_\beta} \e^{-\frac{1}{2N}\pmb{k}^T\wt{K} \pmb{k}} (b_i^T{\pmb k})(b_j^T{\pmb k})}{\sum_{\pmb{k}\in\mathbb Z^{d-2}-\pmb{k}_\beta} \e^{-\frac{1}{2N}\pmb{k}^T\wt{K} \pmb{k}}  } +O\left(N^{-3/2+c\delta}\right)\nonumber \\
 &= \frac{1}{N}\mathsf{Mat}_{ij} +O\left(N^{-3/2+c\delta}\right).
\end{align}
Combining everything, from Eqs. \eqref{eq:all-terms-trace}, \eqref{eq:T1}, and \eqref{eq:T2}, we conclude that
\begin{align}
\label{eq:int-last-bounds}
\tnorm{\sum_{P\in \mc{S}(E)} \mu_E(P)\,\eta_P^{\otimes k}-\tau{\scriptstyle(E/N)}^{\otimes k}}
&\leq  \frac{k}{2N}\sum_{i=0}^{d-1}\left|\sum_{j=0}^{d-1}P_\beta(j)^{-1}\!\left[1-P_\beta(j)^{-1}\mathsf{Mat}_{jj}\right]\!\mathsf{Mat}_{ij}\right|\nonumber\\
&+ \frac{k(k-1)}{2N}\sum_{i,j=0}^{d-1}\left|\mathsf{Mat}_{ij}\right|+O\left(N^{-3/2+c\delta}\right).
\end{align}

In Lemma \ref{lemma:append-restriction-to-T}, we show that $\sum_{i=0}^{d-1}\left|\sum_{j=0}^{d-1}P_\beta(j)^{-1}\!\left[1-P_\beta(j)^{-1}\mathsf{Mat}_{jj}\right]\!\mathsf{Mat}_{ij}\right|\leq 4+8\kappa$ and $\sum_{i,j=0}^{d-1}\left|\mathsf{Mat}_{ij}\right|\leq (d-2)$. Thus,
\begin{align*}
\tnorm{\sum_{P\in \mc{S}(E)} \mu_E(P)\,\eta_P^{\otimes k}-
\tau{\scriptstyle(E/N)}^{\otimes k}}&\leq \frac{2k(1+2\kappa)}{N}+\frac{k(k-1)(d-2)}{2N} + O\left(N^{-3/2+c\delta}\right)\\
&= \frac{k(8\kappa+d(k-1)-2k+6)}{2N}+ O\left(N^{-3/2+c\delta}\right).
\end{align*}
And for equidistant spectrum use $\kappa=(d-1)/2$. This concludes the proof.
\end{proof}

\begin{remark}[Dependence of error in terms of $k$ and $N$] 
\label{rem:worst-scaling}
Note that in Lemma \ref{le:lemma1:appendix}, the contributions to $\tnorm{\sum_{P\in \mc{S}(E)} \mu_E(P)\,\eta_P^{\otimes k}-
\tau{\scriptstyle(E/N)}^{\otimes k}}$ involve terms like $k^{i}\norm{P-P_{\beta}}^{i}$ for even integer $i>0$ and $k^{i}\norm{P-P_{\beta}}^{i+1}$ for odd integer $i>0$. They give the worst-case scaling as $\left(\frac{k^2}{N}\right)^{i/2}$, where $i>0$, implying that the deviation from thermality remains small as long as $k$ does not grow faster than $\sqrt{N}$. 
\end{remark}

We now bound the terms $\sum_{i=0}^{d-1}\left|\sum_{j=0}^{d-1}P_\beta(j)^{-1}\!\left[1-P_\beta(j)^{-1}\mathsf{Mat}_{jj}\right]\!\mathsf{Mat}_{ij}\right|$ and $\sum_{i,j=0}^{d-1}\left|\mathsf{Mat}_{ij}\right|$ appearing in Eq. \eqref{eq:int-last-bounds} of  Lemma \ref{le:lemma1:appendix}  as the following lemma. 

\begin{lemma}
\label{lemma:append-restriction-to-T} 
Let $A=BJ$, and $\wt{K}= J^TB^TK BJ$. Then $BJ\wt{K}^{-1}J^TB^T=A(A^TKA)^{-1}A^T$ is the restriction of $K^{-1}$ to the subspace $T$. Let $b_i^T$ be the $i$th row of matrix $A=BJ$ and $\mathsf{Mat}_{ij}=b_i^T\wt{K}^{-1}b_j$, then
\begin{align}
&\sum_{i,j=0}^{d-1}\left| \mathsf{Mat}_{ij}\right|\leq d-2,\\
&\sum_{i=0}^{d-1}\left|\sum_{j=0}^{d-1}P_\beta(j)^{-1}\!\left[1-P_\beta(j)^{-1}\mathsf{Mat}_{jj}\right]\!\mathsf{Mat}_{ij}\right|\leq 4+8\,\kappa.
\end{align}
\end{lemma}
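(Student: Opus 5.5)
The plan is to make $M:=A(A^TKA)^{-1}A^T$ completely explicit and then bound both quantities by elementary moment inequalities.

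\emph{Step 1 (structure of $M$).} The matrix $A=BJ$ is injective with column space $T=\{x:\ C^Tx=0\}$, where $C=(\mathbf 1,\ \pmb{E})$ is the $d\times 2$ matrix of constraint vectors. The expression $A(A^TKA)^{-1}A^T$ is invariant under $A\mapsto AG$ for invertible $G$, so it depends only on $T$. Writing $D:=K^{-1}=\diag(P_\beta)$, the operator $K^{1/2}MK^{1/2}$ is the orthogonal projector onto $K^{1/2}T$. The standard formula for the $K$-orthogonal complement then gives
\begin{align*}
M=D-DC\,(C^TDC)^{-1}C^TD .
\end{align*}
I will invert the $2\times 2$ matrix $C^TDC$ using $E_{\mathrm{avg}}$ and $\sigma^2$. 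With $e_i:=E_i-E_{\mathrm{avg}}$ and $P_i:=P_\beta(i)$, this should yield the closed form
\begin{align*}
\mathsf{Mat}_{ij}=P_i\delta_{ij}-P_iP_j-\frac{P_iP_j\,e_ie_j}{\sigma^2}.
\end{align*}
This is the covariance of a multinomial conditioned on the energy. Three properties follow: $M\succeq0$, $M\mathbf 1=M\pmb E=0$, and $\tr(KM)=\operatorname{rank}=d-2$.

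\emph{Step 2 (first bound).} Since $M\succeq 0$, we have $|\mathsf{Mat}_{ij}|\le\sqrt{\mathsf{Mat}_{ii}\mathsf{Mat}_{jj}}$. Cauchy--Schwarz then gives
\begin{align*}
\sum_{i,j}|\mathsf{Mat}_{ij}|\le\Big(\sum_i\sqrt{\mathsf{Mat}_{ii}}\Big)^2\le\Big(\sum_iP_i\Big)\Big(\sum_i\frac{\mathsf{Mat}_{ii}}{P_i}\Big)=\tr(KM)=d-2 .
\end{align*}

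\emph{Step 3 (second bound).} From the closed form, $\mathsf{Mat}_{jj}/P_j=1-P_j-P_je_j^2/\sigma^2$. Hence $P_j^{-1}[1-P_j^{-1}\mathsf{Mat}_{jj}]=1+e_j^2/\sigma^2$, and the vector in question is $v=M(\mathbf 1+e^2/\sigma^2)=\sigma^{-2}M(e^2)$, using $M\mathbf 1=0$. Applying the closed form of $M$, and using $\sum_jP_je_j^2=\sigma^2$, I get
\begin{align*}
v_i=P_i\Big(\frac{e_i^2}{\sigma^2}-1-\frac{e_i\,m_3}{\sigma^4}\Big),\qquad m_3:=\sum_jP_je_j^3 .
\end{align*}
Summing absolute values yields $\sum_i|v_i|\le 2+|m_3|\,\E|e|/\sigma^4\le 2+|m_3|/\sigma^3$.

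The remaining task, and the main obstacle, is to bound this skewness-type ratio by $2+8\kappa$. A crude estimate $|m_3|\le\max_i|e_i|\,\sigma^2$ only gives $\max|e|/\sigma$, which is not controlled by $\kappa$ alone. I would therefore exploit the Gibbs form of $P_\beta$. The first attempt is the identity $m_3=-\partial_\beta\sigma^2$ combined with a comparison of the thermal distribution with its nearest-neighbour gaps $\ge s$, aiming for $\sigma\gtrsim s\cdot(\text{tail mass})$ and $|m_3|\lesssim \max\{\bar\varepsilon-E_0,E_{d-1}-\bar\varepsilon\}\,\sigma^2/s$. If a $\beta$-uniform estimate fails at extreme temperatures, I would fall back on the standing assumption $P_\beta\ge p_{\min}$ and accept a $p_{\min}$-dependent constant in place of $8\kappa$.
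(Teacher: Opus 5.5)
\textbf{There is a genuine gap: the final step reduces the problem to a false inequality.}

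Your Steps 1 and 2 are correct. The closed form $\mathsf{Mat}_{ij}=P_i\delta_{ij}-P_iP_j-P_iP_je_ie_j/\sigma^2$ checks out. Step 2 is exactly the paper's Cauchy--Schwarz argument. Your formula $v_i=P_i(e_i^2/\sigma^2-1-e_im_3/\sigma^4)$ agrees with the paper's $y_i=q_i-P_\beta(i)(2+\gamma e_i)$, once one notes $q_j=P_j(1+e_j^2/\sigma^2)$ and $\gamma=m_3/\sigma^4$.

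The problem is the last reduction. From the correct bound $\sum_i|v_i|\le 2+|m_3|\,\mathbb{E}|e|/\sigma^4$, you use $\mathbb{E}|e|\le\sigma$ and then try to bound the skewness $|m_3|/\sigma^3$ by $2+8\kappa$. That target is false, so no argument through the Gibbs form (such as $m_3=-\partial_\beta\sigma^2$) can reach it. Take $d=3$ with levels $0,1,2$, so $s=1$ and $\kappa=1$, and let $x=e^{-\beta}\to0$. Then $\sigma^2=x+O(x^2)$ and $m_3=x+O(x^2)$, hence $|m_3|/\sigma^3\approx x^{-1/2}\to\infty$, while $2+8\kappa=10$. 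Your fallback to a $p_{\min}$-dependent constant would prove a weaker statement, not the lemma, whose constant depends on the spectrum only through $\kappa$.

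The missing idea is to control $\mathbb{E}|e|$ by $\sigma^2$ rather than by $\sigma$, using the level spacing. Pick $j$ with $E_j\le E_{\mathrm{avg}}\le E_{j+1}$ and set $a=E_{\mathrm{avg}}-E_j$, $b=E_{j+1}-E_{\mathrm{avg}}$. Then $e_i^2\ge a|e_i|$ for $i\le j$ and $e_i^2\ge b|e_i|$ for $i>j$. Because $\sum_iP_ie_i=0$, the partial sums of $P_i|e_i|$ over $i\le j$ and over $i>j$ are each $\frac12\mathbb{E}|e|$. Hence $\sigma^2\ge\frac{a+b}{2}\,\mathbb{E}|e|\ge\frac{s}{2}\,\mathbb{E}|e|$.

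Combine this with the estimate you discarded, $|m_3|\le\max_i|e_i|\,\sigma^2$. This gives $|m_3|\,\mathbb{E}|e|/\sigma^4\le 2\max_i|e_i|/s\le 2(E_{d-1}-E_0)/s\le 4\kappa$. Therefore $\sum_i|v_i|\le 2+4\kappa\le 4+8\kappa$, with no Gibbs structure and no $p_{\min}$.

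This spacing inequality is exactly the one the paper uses. The paper reaches $4+8\kappa$ through an abstract decomposition: $q_j\in[0,1]$, $\sum_jq_j=2$, $q=y+z$ with $z_i=P_\beta(i)(2+\gamma e_i)$, and $|\gamma|\sigma^2\le 2\max_i|e_i|$. Your explicit closed form, once repaired this way, gives the sharper constant $2+4\kappa$.
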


\begin{proof}
Define the constraint matrix
\begin{align}
C =
\begin{pmatrix}
1 & 1 & \cdots & 1 \\
E_0 & E_1 & \cdots & E_{d-1}
\end{pmatrix}.
\end{align}
Then the constraint subspace $T=\left\{x\in\mathbb{R}^d: \sum_ix_i=0, \sum_i E_ix_i=0\right\}$ can be written as $\ker{C}$. The matrix $B\in\mathbb R^{d\times(d-2)}$ was constructed so that $\pmb{u} = B\pmb{w} \in T$. Equivalently, $CB = 0$. Thus every column of $B$ lies in $T$, and therefore the column space $\operatorname{col}(B)$ satisfies $\operatorname{col}(B) \subseteq T$. The first $d-2$ rows of $B$ form the identity matrix, so the columns of $B$ are linearly independent. Hence $\operatorname{rank}(B) = d-2$. Since $\dim(T)=d-2$ and $\operatorname{col}(B)\subseteq T$, we conclude $\operatorname{col}(B) = T$. Let $A=BJ$, where $J\in\mathbb R^{(d-2)\times(d-2)}$ is an invertible matrix. Multiplying by an invertible matrix does not change the column space, so $\operatorname{col}(A) = \operatorname{col}(B) = T$. Thus the columns of $A=BJ$ form a basis of the constraint subspace $T$.
 
Consider the weighted inner product $\langle x,y\rangle_K = x^T K y$, where $K$ is a positive definite matrix. The projection $\mathsf{Proj}_K x$ of a vector $x$ onto $T$ with respect to this inner product
is characterized by the orthogonality condition $\langle x - \mathsf{Proj}_K x, u \rangle_K = 0$
for all  $u \in T$. Since every $u \in T$ can be written as $u = Ay$ for some $y$, the condition becomes $\langle x - \mathsf{Proj}_K x, Ay \rangle_K = 0$ for all $y$. Substituting the definition of the inner product gives $(x - \mathsf{Proj}_K x)^T K Ay = 0$ for all $y$. Since this holds for all $y$, we obtain $A^T K (x - \mathsf{Proj}_K x) = 0$. Hence $A^T K x = A^T K \mathsf{Proj}_K x$. Now $\mathsf{Proj}_K x$ lies in the column space of $A$, so there exists a vector $y$ such that $\mathsf{Proj}_K x = Ay$. Thus $A^T K x = A^T K A\, y$. Since $A$ has full column rank and $K$ is positive definite,
the matrix $A^T K A$ is invertible, and therefore
\begin{align}
Ay=\mathsf{Proj}_K x = A (A^T K A)^{-1} A^T K x.
\end{align}
Since this holds for all $x$, the projection matrix is $\mathsf{Proj}_K = A \wt{K}^{-1} A^T K$, where $\wt{K}= A^TK A$. It is easy to see that $\mathsf{Proj}_K^2=\mathsf{Proj}_K$. Now
\begin{align}
\label{eq:proj-constrained}
 \mathsf{Proj}_K K^{-1} = A \wt{K}^{-1} A^T=\mathsf{Mat}.  
\end{align}
$\mathsf{Mat}$ is the restriction of $K^{-1}$ to the subspace $T$. That is, $\mathsf{Mat}$ first applies $K^{-1}$ and then projects onto $T$. Further, $0\leq \mathsf{Mat}\leq K^{-1}$. $\mathsf{Mat}\leq K^{-1}$ follows by noting that $K^{-1}-\mathsf{Mat}=(\mathbb{I}-\mathsf{Proj}_K)K^{-1}\geq 0$. $\mathsf{Mat}\geq 0$ can be proved as follows. $x^T \mathsf{Mat}\,x= x^TA\wt{K}^{-1} A^T x = y^T\wt{K}^{-1}y$ for some $y=A^Tx$, then since $\wt{K}^{-1}\geq 0$, we conclude that $x^T \mathsf{Mat}\,x\geq 0$ or $\mathsf{Mat}\geq 0$. Note that $\sum_{ij}b_i^T\wt{K}^{-1}b_j = \sum_{ij}\mathsf{Mat}_{ij}$.  Now, from Cauchy--Schwarz inequality, we have
\begin{align*}
 \sum_{ij}\left|b_i^T\wt{K}^{-1}b_j\right| =\sum_{ij}\left|\mathsf{Mat}_{ij}\right| \leq    \sum_{ij}\sqrt{\mathsf{Mat}_{ii}\mathsf{Mat}_{jj}}. 
\end{align*}
Recalling that $K=\diag\left(P_\beta(0)^{-1},\cdots,P_\beta(d-1)^{-1}\right)$, we write
\begin{align*}
 \sum_{ij}\left|b_i^T\wt{K}^{-1}b_j\right|  &\leq    \left(\sum_{i}\sqrt{\mathsf{Mat}_{ii}K_{ii}}\sqrt{P_\beta(i)}\right)^2\\
 &\leq \left(\sum_{i=0}^{d-1}\mathsf{Mat}_{ii}K_{ii}\right)\left(\sum_{i=0}^{d-1}P_\beta(i)\right)=\tr\left(\mathsf{Mat}\,K\right)=\tr(\mathsf{Proj}_K) \leq d-2,
\end{align*}
where we used $\tr(\mathsf{Proj}_K)=\dim T=d-2$.

Let $q_j:=1-P_\beta(j)^{-1}b_j^T\wt{K}^{-1}b_j=1-(\mathsf{Proj}_K)_{jj}$, and let
$v:=\left(P_\beta(0)^{-1}q_0,\cdots,P_\beta(d-1)^{-1}q_{d-1}\right)^T$, so that $v=Kq$ and
$\sum_{j=0}^{d-1}P_\beta(j)^{-1}q_j\,\mathsf{Mat}_{ij}=\left(\mathsf{Mat}\,v\right)_i=\left(\mathsf{Proj}_K\,q\right)_i$. Then
\begin{align*}
\sum_{i=0}^{d-1}\left|\sum_{j=0}^{d-1}P_\beta(j)^{-1}\!\left[1-P_\beta(j)^{-1}b_j^T\wt{K}^{-1}b_j\right]\!b_i^T\wt{K}^{-1}b_j\right|=\norm{\mathsf{Proj}_K\,q}_1.
\end{align*}
Note that since $\mathsf{Proj}_K^2=\mathsf{Proj}_K$, $0\le\mathsf{Proj}_K\le I$ and its diagonal entries satisfy $(\mathsf{Proj}_K)_{jj}\in[0,1]$, hence $q_j\in[0,1]$. And since $\tr(\mathsf{Proj}_K)=d-2$, we have $\sum_jq_j=d-\tr(\mathsf{Proj}_K)=2$. Let $y:=\mathsf{Proj}_K\,q$ and since $\mathsf{Proj}_K\,q\in T$, we have $\sum_iy_i=0$. Write $q=y+z$ with $z\perp_KT$. Since $z\perp_KT$, we have $Kz\in T^\perp=\text{col}(C^T)$ and we can write $z_i=P_\beta(i)(\alpha+\gamma E_i)$ for some $\alpha,\gamma\in\mathbb{R}$. Imposing $y=q-z\in T$, i.e.\ $\sum_iz_i=\sum_iq_i=2$ and $\sum_iE_iz_i=\sum_iE_iq_i$, gives $\alpha+\gamma E_{\mathrm{avg}}=2$ and $\alpha E_{\mathrm{avg}}+\gamma\sum_iP_\beta(i)E_i^2=\sum_iE_iq_i$. Subtracting $E_{\mathrm{avg}}$ times the first from the second yields $\gamma\sigma^2=\sum_iq_i(E_i-E_{\mathrm{avg}})$, and therefore $\gamma=\dfrac{\sum_iq_i(E_i-E_{\mathrm{avg}})}{\sigma^2}$ and
\begin{align}
\label{eq:f-explicit}
y_i=q_i-P_\beta(i)\left(2+\gamma\left(E_i-E_{\mathrm{avg}}\right)\right).
\end{align}
Since $\sum_iy_i=0$, and $(q_i-z_i)^+ \leq q_i + z_i^-$ for any $q_i\geq 0$ and $z_i\in \mathbb{R}$, we have
\begin{align*}
\norm{\mathsf{Proj}_K\,q}_1=\norm{y}_1\leq 2\sum_i(q_i+z_i^-)= 4 + 2\sum_i z_i^{-}.
\end{align*}
Recall that $z_i= P_\beta(i)\left(2+\gamma\left(E_i-E_{\mathrm{avg}}\right)\right)$. If $z_i<0$ then $\gamma P_\beta(i)(E_i-E_{\mathrm{avg}})<-2 P_\beta(i)<0$, then $\left(\gamma P_\beta(i)(E_i-E_{\mathrm{avg}})\right)^-=-\gamma P_\beta(i)(E_i-E_{\mathrm{avg}})\geq -2P_\beta(i)-\gamma P_\beta(i)(E_i-E_{\mathrm{avg}}) =z_i^-$. Thus, $z_i^- \leq P_\beta(i)\left(\gamma (E_i-E_{\mathrm{avg}})\right)^-$. For $z_i>0$, this is trivial. Then
\begin{align*}
\norm{\mathsf{Proj}_K\,q}_1&\leq 4 + 2\sum_iP_\beta(i)\left(\gamma (E_i-E_{\mathrm{avg}})\right)^-\\
&=4+ |\gamma |\sum_iP_\beta(i)\left|E_i-E_{\mathrm{avg}}\right|.
\end{align*}
Choose $j$ with $E_j\leq E_{\mathrm{avg}}\leq E_{j+1}$ and set $a=E_{\mathrm{avg}}-E_j\geq0$, $b=E_{j+1}-E_{\mathrm{avg}}\geq0$, so that $a+b=E_{j+1}-E_j\geq s$. For $i\leq j$ we have $E_{\mathrm{avg}}-E_i\geq a$, and for $i>j$ we have $E_i-E_{\mathrm{avg}}\geq b$. Therefore
\begin{align*}
\sigma^2&=\sum_{i\leq j}P_\beta(i)\left(E_{\mathrm{avg}}-E_i\right)^2+\sum_{i>j}P_\beta(i)\left(E_i-E_{\mathrm{avg}}\right)^2\\
&\geq a\sum_{i\leq j}P_\beta(i)\left|E_{\mathrm{avg}}-E_i\right|+b\sum_{i>j}P_\beta(i)\left|E_i-E_{\mathrm{avg}}\right|\\
&=\frac{a+b}{2}\sum_iP_\beta(i)\left|E_i-E_{\mathrm{avg}}\right|\geq\;\frac{s}{2}\sum_iP_\beta(i)\left|E_i-E_{\mathrm{avg}}\right|.
\end{align*}
That is, $\sum_iP_\beta(i)\left|E_i-E_{\mathrm{avg}}\right|\leq 2\sigma^2/s$. Using $q_i\in[0,1]$ and $\sum_iq_i=2$,
\begin{align*}
|\gamma|\,\sigma^2=\left|\sum_iq_i(E_i-E_{\mathrm{avg}})\right|\le\sum_iq_i\left|E_i-E_{\mathrm{avg}}\right|\le 2\max_i\left|E_i-E_{\mathrm{avg}}\right|.
\end{align*}
Since $E_0<E_{\mathrm{avg}}<E_{d-1}$, we have $\max_i|E_i-E_{\mathrm{avg}}|\le E_{d-1}-E_0$. Together with $\sum_iP_\beta(i)|E_i-E_{\mathrm{avg}}|\le 2\sigma^2/s$,
\begin{align*}
\left|\gamma\right|\sum_iP_\beta(i)\left|E_i-E_{\mathrm{avg}}\right|\leq \frac{2\sigma^2}{s}\cdot\frac{2\max_i|E_i-E_{\mathrm{avg}}|}{\sigma^2}=\frac{4(E_{d-1}-E_0)}{s}.
\end{align*}
Recalling $E_{d-1}-E_0\le 2\kappa s$, we get $|\gamma|\sum_iP_\beta(i)|E_i-E_{\mathrm{avg}}|\le 8\kappa$. Combining everything, we have $\norm{\mathsf{Proj}_K\,q}_1\le  4+8\kappa$. Thus,
\begin{align}
\sum_{i=0}^{d-1}\left|\sum_{j=0}^{d-1}P_\beta(j)^{-1}\!\left[1-P_\beta(j)^{-1}b_j^T\wt{K}^{-1}b_j\right]\!b_i^T\wt{K}^{-1}b_j\right|\leq 4+ 8\,\kappa.
\end{align}
This concludes the proof of the lemma.
\end{proof}

Let $\rho^{(N)}=\sum_{E}c_E\rho_E^{(N)}$ be an EPU-invariant $N$-qudit state with $p_{\min}(\rho^{(N)})>0$. Here $0\leq c_E\leq 1$ and $\sum_Ec_E=1$. Let $\mu(\dbeta)=\sum_{E}c_E\,\delta (\beta-\beta_{E/N})\,\dbeta$ be a probability measure. Then
\begin{align*}
  \tnorm{\tr_{N-k}\left(\rho^{(N)}\right) - \int\mu(\dbeta)\,\tau_\beta^{\otimes k}}\leq \sum_{E}c_E \tnorm{\tr_{N-k}\left(\rho_E^{(N)}\right) - \tau_{\beta_{E/N}}^{\otimes k}}.
\end{align*}
Applying triangle inequality, we have
\begin{align*}
\tnorm{\tr_{N-k}\left(\rho_E^{(N)}\right) - \tau_{\beta_{E/N}}^{\otimes k}}\leq \tnorm{\tr_{N-k}\left(\rho_E^{(N)}\right) -\sum_{P\in \mc{S}(E)}\mu_E(P)\,\eta_P^{\otimes k}}+ \tnorm{\sum_{P\in \mc{S}(E)}\mu_E(P)\,\eta_P^{\otimes k} - \tau_{\beta_{E/N}}^{\otimes k}}.
\end{align*}
Now using sampling with and without replacement bound of Ref. \cite{Diaconis1980} and Eq.~\eqref{eq-bound-on norm1}, we have 
\begin{align}
\label{eq:replace-append-bound}
 \tnorm{\tr_{N-k}\left(\rho_E^{(N)}\right) -\sum_{P\in \mc{S}(E)}\mu_E(P)\,\eta_P^{\otimes k}}\leq \frac{2kd}{N}.
\end{align}
By Eq.~\eqref{eq:pmin-append}, every shell $E$ with $\hat P_E \rho^{(N)} \hat P_E \neq 0$ satisfies
$\min_j P_{\beta_{E/N}}(j) \ge p_{\min}$, so Lemma~\ref{le:lemma1:appendix} applies to each such $E$.
Combining this with Lemma~\ref{le:lemma1:appendix} we complete the proof of Theorem \ref{th:trace-de-finetti}.

\section{de Finetti theorem in terms of relative entropy}
\label{sup:rel-de-finneti}
In the following we present a de Finetti theorem for states invariant under all energy-preserving unitaries using relative entropy. In particular, we first bound the relative entropy between the $k$-qudit marginal of the extremal state $\rho_E^{(N)}\in\mc{S}_H$ and the mixture of energy diagonal states as Lemma~\ref{le:theorem11}. Then we bound the relative entropy between the mixture of energy diagonal states and the mixture of thermal states. Combining this with Lemma \ref{le:theorem11}, we obtain Theorem \ref{th:rel-de-finetti}. We present Lemma~\ref{le:theorem11} and Theorem~\ref{th:rel-de-finetti} below.

\begin{lemma}[Approximation (in relative entropy) of EPU-invariant states as mixture of energy diagonal states]
\label{le:theorem11}
Let $\rho_{E}^{(N)} \in \mc{S}_H$ be the $N$-qudit energy $E$ extremal state. Then, there exists a mixture of products of energy-diagonal states, each with average energy $E/N$, such that any $k$-qudit marginal of $\rho_{E}^{(N)}$ can be approximated for a fixed $k\ll N$ and large $N$ as 
\begin{align}
&\DD`*{\tr_{N-k}\left(\rho_E^{(N)}\right)}{\sum_{P\in \mc{S}(E)}\mu_E(P)\,\eta_P^{\otimes k}}\leq \frac{(d-1)k(k-1)}{2(N-1)(N-k+1)},
\end{align}
where $\mu_E(P)=\frac{\binom{N}{NP}}{g(E)}$ is the fraction of strings with type $P\in \mc{S}(E)$ and $\eta_P=\sum_{i=0}^{d-1}P(i)\ket{E_i}\bra{E_i}$ for all $P\in \mc{S}(E)$.
\end{lemma}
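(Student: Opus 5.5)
The plan is to reduce the quantum statement to a classical Kullback--Leibler bound between sampling without and with replacement, and then invoke (or reprove) Stam's bound. By Eqs.~\eqref{eq:ptracestate5} and \eqref{eq:thermaldecomp2}, both $\tr_{N-k}(\rho_E^{(N)})$ and $\sum_{P\in\mc{S}(E)}\mu_E(P)\eta_P^{\otimes k}$ are diagonal in the product energy basis. They are supported on the same orthogonal normalized projectors $\hat\Pi_Q$, with coefficients $\sum_P\mu_E(P)\hyp$ and $\sum_P\mu_E(P)\mult{P}$ respectively. Since the $\hat\Pi_Q$ are mutually orthogonal and each is a multiple of the identity on its support, the quantum relative entropy collapses to the classical KL divergence between these two mixtures over $Q\in\mathcal{P}_k$.

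Next I will use joint convexity of the relative entropy, with the common mixing weights $\mu_E(P)$, to get
\begin{align*}
\DD`*{\tr_{N-k}\left(\rho_E^{(N)}\right)}{\sum_{P}\mu_E(P)\,\eta_P^{\otimes k}}\le\sum_{P\in\mc{S}(E)}\mu_E(P)\,D\!\left(\hyp\,\big\|\,\mult{P}\right).
\end{align*}
It therefore suffices to prove, uniformly in $P$, that
\begin{align*}
D(\mathsf{Hyp}\|\mathsf{Mult})\le\frac{(d-1)k(k-1)}{2(N-1)(N-k+1)}.
\end{align*}
This is Stam's 1978 bound for an urn with $d$ colours and $n_j=NP(j)$ balls of colour $j$; I would cite it, but also sketch its proof as follows.

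Pass from types to ordered sequences $\By$; this is legitimate because both laws are exchangeable with the same multinomial count factor $|T_Q|$. The probability of a fixed $\By$ with counts $m_j$ is $\prod_j (n_j)_{m_j}/(N)_k$ without replacement and $\prod_j n_j^{m_j}/N^k$ with replacement, where $(n)_m$ is the falling factorial. Hence
\begin{align*}
D=\log\frac{N^k}{(N)_k}+\sum_j\E_{\mathsf{Hyp}}\Big[\sum_{i<m_j}\log\big(1-\tfrac{i}{n_j}\big)\Big].
\end{align*}
I will bound the second term above using $\log(1-x)\le -x$, giving $-\sum_j\E[m_j(m_j-1)]/(2n_j)$. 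The hypergeometric factorial moment is exact:
\begin{align*}
\E[m_j(m_j-1)]=\frac{k(k-1)\,n_j(n_j-1)}{N(N-1)},
\end{align*}
so this term equals $-\tfrac{k(k-1)}{2N(N-1)}(N-d)$, with terms where $n_j=0$ contributing nothing. The first term is $-\sum_{i<k}\log(1-i/N)$.

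The main obstacle is reaching the exact constant $\tfrac{(d-1)k(k-1)}{2(N-1)(N-k+1)}$. The naive estimate $-\log(1-x)\le x/(1-x)$ on the first term gives
\begin{align*}
\sum_{i<k}\frac{i}{N-i}\le\frac{k(k-1)}{2(N-k+1)}.
\end{align*}
Subtracting $\tfrac{k(k-1)(N-d)}{2N(N-1)}$ then leaves a remainder that must be shown to be at most $\tfrac{(d-1)k(k-1)}{2(N-1)(N-k+1)}$. The crude version of this comparison does not close, because it leaves an extra $O(k^3/N^2)$ piece. To fix this I would follow Stam: pair the $i$-th term of the $N$-sum with the corresponding $n_j$-sums, using the sharper inequality
\begin{align*}
-\log(1-x)+\log(1-y)\le\frac{x-y}{1-x}\quad\text{for }y\le x,
\end{align*}
before averaging. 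Alternatively, bound $D\le\log(1+\chi^2)\le\chi^2$ and compute $\chi^2$ exactly through factorial moments. The $\chi^2$ route is what I would try if the pairing bookkeeping fails, since it yields the $(d-1)$ factor directly from $\sum_j 1 - 1$.
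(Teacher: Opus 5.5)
Your proposal is correct and matches the paper's proof. You reduce the quantum relative entropy to the classical divergence between $\sum_P\mu_E(P)\mathsf{Hyp}_P^{(N,k)}$ and $\sum_P\mu_E(P)\mathsf{Mult}_P^{(N,k)}$ on $\mathcal{P}_k$, apply joint convexity, and invoke Stam's bound, which the paper cites via Johnson et al. If you want the optional self-contained sketch, note that your global $\chi^2$ fallback cannot give the stated constant: for $d=2$, $k=3$, $N=100$, $NP=(1,99)$ one gets $\log(1+\chi^2)=\log(1+301/99^3)>6/(2\cdot 99\cdot 98)$. A route that does close is the chain rule over the $k$ sequential draws, bounding each conditional divergence by its $\chi^2$ and using the hypergeometric variance; this gives $\sum_{s=0}^{k-1}\frac{(d-1)s}{(N-1)(N-s)}$, which is at most the target.
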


\begin{proof}
\begin{align}
\rho_{E}^{(k)} =\sum_{Q\in \mathcal{P}_k}\left(\sum_{P\in \mc{S}(E)} \mu_E(P)\,\,\hyp\right) \hat{\Pi}_Q.
\end{align}
\begin{align}
\tilde{\eta}_E^{(k)}
=\sum_{Q\in\mathcal{P}_k}\left(\sum_{P\in \mc{S}(E)}\mu_E(P)\,\,\mult{P} \right)\,\hat{\Pi}_Q.
\end{align}
Define two probability distributions $R_{\mathrm{Hyp}}=\sum_{P\in \mc{S}(E)} \mu_E(P)\mathsf{Hyp}_P^{(N,k)}$ and $R_{\mathrm{Mult}}=\sum_{P\in \mc{S}(E)} \mu_E(P)\mathsf{Mult}_P^{(N,k)}$ that take values $\sum_{P\in \mc{S}(E)} \mu_E(P)\mathsf{Hyp}_P^{(N,k)}(Q)$ and $\sum_{P\in \mc{S}(E)} \mu_E(P)\mathsf{Mult}_P^{(N,k)}(Q)$, respectively for $Q\in \mc{P}_k$. Then
\begin{align*}
   \DD`*{\rho_{E}^{(k)}}{\tilde{\eta}_E^{(k)}}
   &=\DD`*{R_{\mathrm{Hyp}}}{R_{\mathrm{Mult}}}\\
   \\
    &=\DD`*{\sum_{P\in \mc{S}(E)} \mu_E(P)\mathsf{Hyp}_P^{(N,k)}}{\sum_{P\in \mc{S}(E)} \mu_E(P)\mathsf{Mult}_P^{(N,k)}}\\
    &\leq \sum_{P\in \mc{S}(E)} \mu_E(P)\DD`*{\mathsf{Hyp}_P^{(N,k)}}{\mathsf{Mult}_P^{(N,k)}}\nonumber\\
    &\leq \frac{(d-1)k(k-1)}{2(N-1)(N-k+1)},
\end{align*}
where in the last line we used the bound on the relative entropy between the sampling from with and without replacement. In particular, Ref. \cite{Johnson2024} proves that $\DD`*{\mathsf{Hyp}_P^{(N,k)}}{\mathsf{Mult}_P^{(N,k)}}\leq \frac{(d-1)k(k-1)}{2(N-1)(N-k+1)}$ (see also Refs. \cite{Stam1978, Borderi2022, Gavalakis2024}). This completes the proof of the lemma.
\end{proof}

\begin{theorem}[de Finetti theorem for EPU invariant states]
\label{th:rel-de-finetti}
Let $\rho^{(N)}=\sum_{E\in \mathsf{Spec}(H)} c_E \rho^{(N)}_E$ be an $N$-qudit state which is invariant under all energy preserving unitaries. Assume that $p_{\min}\left(\rho^{(N)}\right) > 0$. Then, for any $k$-qudit marginal of such a state, there exists a mixture of $k$-qudit thermal states $\{\tau_{\beta}^{\otimes k}\}_{\beta}$ such that for a fixed $k\ll \sqrt{N}$ and large $N$
\begin{align}
\DD`*{\tr_{N-k}\left(\rho^{(N)}\right)}{\int\mu(\dbeta)\,\tau_\beta^{\otimes k}} &\leq \frac{k (d-2)}{2N} + O\left(N^{-3/2+c\delta}\right),
\end{align}
where $c>0$ is some $d$-dependent constant,  $\delta\in\left(0,1/2c\right)$, $\tau_\beta$ is the thermal state of single qudit at inverse temperature $\beta$ and $\mu(\dbeta)$ is a valid probability measure on the set $\{\beta\}$ of inverse temperatures. 
\end{theorem}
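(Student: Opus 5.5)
The plan is to reduce to a single energy shell, and then to exploit the fact that all the states involved are diagonal in the energy eigenbasis and have the same mean energy per qudit. This converts the relative entropy to the Gibbs state into an entropy deficit.

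\textbf{Step 1 (reduction to one shell).} Write $\rho^{(N)}=\sum_E c_E\rho_E^{(N)}$ and take $\mu(\dbeta)=\sum_E c_E\,\delta(\beta-\beta_{E/N})\,\dbeta$, as in the proof of Theorem~\ref{th:trace-de-finetti}. By joint convexity of the relative entropy,
\begin{align*}
\DD`*{\tr_{N-k}\left(\rho^{(N)}\right)}{\int\mu(\dbeta)\,\tau_\beta^{\otimes k}}\le\sum_E c_E\,\DD`*{\rho_E^{(k)}}{\tau(E/N)^{\otimes k}} .
\end{align*}
The assumption $p_{\min}>0$ makes Lemma~\ref{le:lemma1:appendix} machinery available for every shell that appears.

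\textbf{Step 2 (energy identity).} Since $\log\tau(E/N)^{\otimes k}=-\beta H_k-k\ln Z_\beta$ is linear in energy, $D(\omega\|\tau^{\otimes k})=kH(P_\beta)-S(\omega)$ for any diagonal $k$-qudit state $\omega$ with mean energy $kE/N$. By permutation symmetry each single-qudit marginal of $\rho_E^{(N)}$ has mean energy $E/N$. Each $\eta_P$ with $P\in\mc{S}(E)$ also has mean energy $E/N$. Writing $\tilde\eta_E^{(k)}=\sum_P\mu_E(P)\eta_P^{\otimes k}$, this gives
\begin{align*}
\DD`*{\rho_E^{(k)}}{\tau^{\otimes k}}=\DD`*{\tilde\eta_E^{(k)}}{\tau^{\otimes k}}+\big[S(\tilde\eta_E^{(k)})-S(\rho_E^{(k)})\big].
\end{align*}

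\textbf{Step 3 (the leading term).} By convexity in the first argument,
\begin{align*}
\DD`*{\tilde\eta_E^{(k)}}{\tau^{\otimes k}}\le k\sum_P\mu_E(P)\,D(P\|P_\beta)=k\sum_P\mu_E(P)\,\big[H(P_\beta)-H(P)\big].
\end{align*}
Expand $H(P_\beta)-H(P)=\tfrac12(P-P_\beta)^TK(P-P_\beta)+R_3(P)$. Evaluate the Gaussian lattice sums exactly as in Eq.~\eqref{eq:chain-of-inequalities-quadratic}. Then $\sum_P\mu_E(P)(P-P_\beta)_i(P-P_\beta)_j=\mathsf{Mat}_{ij}/N+O(N^{-3/2+c\delta})$. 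The cubic remainder averages to $O(N^{-3/2+c\delta})$, using odd-moment cancellation from Eq.~\eqref{eq:first} and the same bookkeeping as Eq.~\eqref{eq:chain-for-linear-terms}. By Lemma~\ref{lemma:append-restriction-to-T}, $\tr(K\,\mathsf{Mat})=\tr(\mathsf{Proj}_K)=d-2$, so this term equals $k(d-2)/(2N)+O(N^{-3/2+c\delta})$.

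\textbf{Step 4 (main obstacle: the entropy gap).} I must show $S(\tilde\eta_E^{(k)})-S(\rho_E^{(k)})=O(k^2/N^2)$, which is $o(1/N)$ for $k\ll\sqrt N$. Write
\begin{align*}
S(\tilde\eta)-S(\rho)=D(\rho\|\tilde\eta)+\tr\big[(\rho-\tilde\eta)(-\log\tilde\eta+\log\tau^{\otimes k})\big].
\end{align*}
The $\log\tau^{\otimes k}$ term may be inserted for free because $\rho$ and $\tilde\eta$ have equal energy. Lemma~\ref{le:theorem11} bounds the first term by $O(k^2/N^2)$. For the second term, both operators are diagonal in the projectors $\hat\Pi_Q$. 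Since $\tilde\eta$ is a mixture of products with $P$ concentrated within $O(N^{-1/2})$ of $P_\beta$, I would show the following. First, $\log\tilde\eta-\log\tau^{\otimes k}$ has coefficients of order $k\|P-P_\beta\|^2\sim k/N$ on typical $Q$, the linear term vanishing as in Step~3. Second, atypical $Q$ carry exponentially small weight, since $p_{\min}>0$ keeps the logarithms bounded by $O(k)$. Combined with $\tnorm{\rho-\tilde\eta}\le 2kd/N$ from Eq.~\eqref{eq-bound-on norm1}, this gives $O(k^2/N^2)$. Making this coefficient estimate uniform over $Q$ is the delicate part. If it fails, I would fall back to a direct Gaussian evaluation of $S(R_{\mathrm{Hyp}})$ via Laplace's method, as in Lemma~\ref{le:lemma1:appendix}. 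Collecting the pieces and averaging over $E$ yields the stated bound $k(d-2)/(2N)+O(N^{-3/2+c\delta})$.
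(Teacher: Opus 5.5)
Your proposal is correct and is essentially the paper's proof. The energy identity of Step~2 just reassembles the paper's split $D(\rho_E^{(k)}\|\tau^{\otimes k})=D(\rho_E^{(k)}\|\tilde\eta_E^{(k)})+D(\tilde\eta_E^{(k)}\|\tau^{\otimes k})+R_E$, which in fact holds without any equal-energy condition; the leading term is the same Gaussian lattice computation ending in $\tr(\mathsf{Proj}_K)=d-2$, and the remainder is handled, as in the paper, by Lemma~\ref{le:theorem11} plus a uniform bound on the log-ratio multiplied by $\tnorm{\rho_E^{(k)}-\tilde\eta_E^{(k)}}\le 2kd/N$.

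Two minor fixes. First, the cross term is $\tr[(\rho_E^{(k)}-\tilde\eta_E^{(k)})(\log\tilde\eta_E^{(k)}-\log\tau^{\otimes k})]$; your sign is flipped, which is harmless since you bound its modulus. Second, your uniformity worry is moot: all $P$ with non-negligible $\mu_E$-weight lie in $\mc{B}_\delta$ and $P_\beta\ge p_{\min}$, so the log-ratio is $O(kN^{-1/2+\delta}/p_{\min})$ for every string. The cross term is therefore already $O(N^{-3/2+c\delta})$ without any cancellation. In fact the averaged linear term is $O(1/N)$ rather than zero, and for fixed $k$ no $Q$ has small weight, so a typical/atypical split would not help anyway.
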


\begin{proof}
Let $\rho^{(N)}$ be an arbitrary state on $N$ qudits that commutes with all energy preserving unitaries. Then we can write it as
\begin{align}
    \rho^{(N)} = \sum_{E\in \mathsf{Spec}(H)} c_E \rho^{(N)}_E.
\end{align}
Partial tracing $(N-k)$ qudits from $\rho^{(N)}$ results in
\begin{align}
   \tr_{N-k}\left( \rho^{(N)}\right) = \sum_{E\in \mathsf{Spec}(H)} c_E  \tr_{N-k}\left( \rho^{(N)}_E\right).
\end{align}
Let $\mu(\dbeta):=\sum_{E\in\spec(H)}c_E\,\delta (\beta-\beta_{E/N})\dbeta$, where $\beta_{E/N}$ is the inverse temperature obtained by solving $\sum_{i=0}^{d-1}E_i\e^{-\beta_{E/N}E_i}=(E/N)\sum_{i=0}^{d-1}\e^{-\beta_{E/N}E_i}$. 
Then
\begin{align}
   \DD`*{\tr_{N-k}\left( \rho^{(N)}\right)}{\int_{\beta}\mu(\dbeta)\,\tau_{\beta}^{\otimes k}} &= \DD`*{\sum_{E\in \mathsf{Spec}(H)} c_E  \tr_{N-k}\left( \rho^{(N)}_E\right)}{\sum_{E\in \mathsf{Spec}(H)} c_E \,\tau{\scriptstyle(E/N)}^{\otimes k}}\nonumber\\
   &\leq \sum_{E\in \mathsf{Spec}(H)} c_E \, \DD`*{\tr_{N-k}\left( \rho^{(N)}_E\right)}{\tau{\scriptstyle(E/N)}^{\otimes k}},
\end{align}
where $\tau{\scriptstyle(E/N)}=\tau_{\beta_{E/N}}$. Now upper bounding $\DD`*{\tr_{N-k}\left( \rho^{(N)}_E\right)}{\tau{\scriptstyle(E/N)}^{\otimes k}}$ will suffice to prove the theorem. Let $\rho_{i_1,\cdots,i_k}=\bra{i_1,\cdots,i_k}\tr_{N-k}\left( \rho^{(N)}_E\right)\ket{i_1,\cdots,i_k}$, then
\begin{align*}
 &\DD`*{\tr_{N-k}\left( \rho^{(N)}_E\right)}{\tau{\scriptstyle(E/N)}^{\otimes k}}\\
 &= \DD`*{\tr_{N-k}\left( \rho^{(N)}_E\right)}{\sum_{P\in \mc{S}(E)} \mu_E(P)\,\eta_P^{\otimes k}}+ \DD`*{\sum_{P\in \mc{S}(E)} \mu_E(P)\,\eta_P^{\otimes k}}{\tau{\scriptstyle(E/N)}^{\otimes k}}+R_E,   
\end{align*}
where 
\begin{align}
\label{eq:R_E-term}
R_E := \sum_{i_1,\cdots,i_k=0}^{d-1} \left(\rho_{i_1,\cdots, i_k} -\sum_{P\in\mc{S}(E)}\mu_E(P)\,\prod_{r=1}^kP(i_r)\right)\,
        \left[ \ln \sum_{P\in\mc{S}(E)}\mu_E(P)\, \prod_{r=1}^kP(i_r) - \ln \prod_{r=1}^kP_\beta(i_r)\right].
\end{align}
From Lemma \ref{le:theorem11}, we already have
$\DD`*{\tr_{N-k}\left( \rho^{(N)}_E\right)}{\sum_{P\in \mc{S}(E)} \mu_E(P)\,\eta_P^{\otimes k}}\le \frac{(d-1)k(k-1)}{2(N-1)(N-k+1)}$. Further, from the joint convexity of the relative entropy, we have $\DD`*{\sum_{P\in \mc{S}(E)} \mu_E(P)\,\eta_P^{\otimes k}}{\tau{\scriptstyle(E/N)}^{\otimes k}}\leq k\sum_{P\in \mc{S}(E)}\mu_E(P) \DD`*{P} {P_{\beta}}$. Note that for all $i=0,\cdots, d-1$, $P_\beta(i)\geq p_{\min}>0$. Consider the following expansion valid for $\left|P(i)-P_\beta(i)\right|\leq N^{-1/2+\delta}$, i.e. for $P \in \mathcal{S}(E) \cap \mathcal{B}_\delta$.
\begin{align*}
   \DD`*{P} {P_{\beta}} 
   &= \sum_{i=0}^{d-1}P_\beta(i)\left(1+ \frac{P(i)-P_\beta(i)}{P_\beta(i)}\right)\ln \left(1+ \frac{P(i)-P_\beta(i)}{P_\beta(i)}\right)\\
   &= \frac{1}{2}\sum_{i=0}^{d-1}\frac{(P(i)-P_\beta(i))^2}{P_\beta(i)} + O\left(\norm{P-P_\beta}_2^3\right).
\end{align*}
Outside $\mathcal{B}_\delta$ the expansion is not needed: by Eq.~\eqref{eq:dropping-outside-eq},
$\mu_E\bigl(\mathcal{S}(E)\setminus\mathcal{B}_\delta\bigr) = O\bigl(e^{-N^{2\delta}/2}\bigr)$,
while $D(P\|P_\beta) \le -\sum_i P(i)\ln P_\beta(i) \le \ln(1/p_{\min})$ uniformly on
$\Delta(E)$, so the complement contributes $O\bigl(e^{-N^{2\delta}/2}\bigr)$. Summing over
$P \in \mathcal{S}(E)$, we obtain
\begin{align*}
   \sum_{P\in\mc{S}(E)}\mu_E(P)\,\DD`*{P} {P_{\beta}} &= \frac{\sum_{P\in\mc{S}(E)}\e^{NH(P)} \prod_j P(j)^{-1/2}\DD`*{P} {P_{\beta}}}{\sum_{Q\in \mathcal{S}(E)}\e^{NH(Q)} \prod_j Q(j)^{-1/2} }\left(1+O(1/N)\right).
\end{align*}
Then the leading order contribution to the above expression comes from
\begin{align*}
   \sum_{P\in\mc{S}(E)}\mu_E(P)\,\DD`*{P} {P_{\beta}} &= \frac{1}{2}\sum_{i=0}^{d-1}\frac{1}{P_\beta(i)}\frac{\sum_{P\in\mc{S}(E) \cap \mathcal{B}_\delta}\e^{-\frac{N}{2}(P-P_\beta)^TK(P-P_\beta)} (P(i)-P_\beta(i))^2  }{\sum_{P\in \mathcal{S}(E) \cap \mathcal{B}_\delta}\e^{-\frac{N}{2}(P-P_\beta)^TK(P-P_\beta)}} + O\left(N^{-3/2+c\delta}\right).
\end{align*}
Converting this expression into $\pmb{k}\in\mathbb Z^{d-2}-\pmb{k}_\beta$ coordinates by noting $P=\pmb{u}/N+P_\beta = B\pmb{w}/N+P_\beta = BJ\pmb{k}/N+P_\beta$ and $\wt{K}=J^TB^TKBJ$, we obtain (see Appendix \ref{subsec:main-results})
\begin{align*}
\sum_{P\in\mc{S}(E)}\mu_E(P)\,\DD`*{P} {P_{\beta}}&=\frac{1}{2N^2}\sum_{i=0}^{d-1}P_\beta(i)^{-1}\frac{\sum_{\pmb{k}\in\mathbb Z^{d-2}-\pmb{k}_\beta} \e^{-\frac{1}{2N}\pmb{k}^T\wt{K} \pmb{k}} (b_i^T{\pmb k})^2}{\sum_{\pmb{k}\in\mathbb Z^{d-2}-\pmb{k}_\beta} \e^{-\frac{1}{2N}\pmb{k}^T\wt{K} \pmb{k}}  } + O\left(N^{-3/2+c\delta}\right)\\
&=\frac{1}{2N}\sum_{i=0}^{d-1}P_\beta(i)^{-1}\,b_i^T\wt{K}^{-1}b_i+ O\left(N^{-3/2+c\delta}\right)\\
  &= \frac{1}{2N}\tr\left(\mathsf{Proj}_KK^{-1} K\right)+ O\left(N^{-3/2+c\delta}\right)\\
  &= \frac{(d-2)}{2N}+ O\left(N^{-3/2+c\delta}\right),
\end{align*}
where we used Eqs. \eqref{eq:zeroth} and \eqref{eq:second} and the fact  that $b_i^T\wt{K}^{-1}b_i = \mathsf{Mat}_{ii}=\left(\mathsf{Proj}_KK^{-1}\right)_{ii}$. Then, we get
\begin{align}
\DD`*{\sum_{P\in \mc{S}(E)} \mu_E(P)\,\eta_P^{\otimes k}}{\tau{\scriptstyle(E/N)}^{\otimes k}} \leq \frac{k(d-2)}{2N}+ O\left(N^{-3/2+c\delta}\right).
\end{align}

Now we bound the remainder term $R_E$ (Eq.~\eqref{eq:R_E-term}) as follows. Consider the following expression 
\begin{align*}
&\ln \sum_{P\in\mc{S}(E)}\mu_E(P)\,  P(i_1)\cdots P(i_k)-\ln  \left(\prod_{r=1}^kP_\beta(i_r)\right) \\
&= \sum_{r=1}^k \sum_{P\in\mc{S}(E)}\mu_E(P)\frac{P(i_r)-P_\beta(i_r)}{P_\beta(i_r)}
+\sum_{1\leq r<s\leq k}\sum_{P\in\mc{S}(E)}\mu_E(P) \frac{\left(P(i_r)-P_\beta(i_r)\right) \left(P(i_s)-P_\beta(i_s)\right)}{P_\beta(i_r)P_\beta(i_s)}\\
&-\frac{1}{2}\left(\sum_{r=1}^k\sum_{P\in\mc{S}(E)}\mu_E(P)\frac{P(i_r)-P_\beta(i_r)}{P_\beta(i_r)}\right)^2 +\text{higher order terms}.
\end{align*}
We bound the above expression term by term to get the leading order corrections. To the leading order, using the chain of equalities in Eq. \eqref{eq:chain-for-linear-terms}, we get
\begin{align*}
\sum_{P\in\mc{S}(E)}\mu_E(P)\frac{P(i_r)-P_\beta(i_r)}{P_\beta(i_r)}= -\frac{1}{2N}\sum_j\frac{\left[1-P_\beta(j)^{-1}\,\mathsf{Mat}_{jj}\right]\mathsf{Mat}_{i_rj}}{P_\beta(j) P_\beta(i_r) }
 +O\left(N^{-3/2+c\delta}\right)
\end{align*}
Similarly, using chain the of equalities in Eq. \eqref{eq:chain-of-inequalities-quadratic}, to the leading order, we obtain
\begin{align*}
\sum_{P\in\mc{S}(E)}\mu_E(P)\frac{(P(i_r)-P_\beta(i_r))(P(i_s)-P_\beta(i_s))}{P_\beta(i_r)P_\beta(i_s)}
&=\frac{1}{N} \frac{\mathsf{Mat}_{i_ri_s} }{P_\beta(i_r)P_\beta(i_s)} +O\left(N^{-3/2+c\delta}\right).
\end{align*}
Thus,
\begin{align}
\label{eq:reminder-lead}
&\left|\ln \sum_{P\in\mc{S}(E)}\mu_E(P)\,  P(i_1)\cdots P(i_k)-\ln  \left(\prod_{r=1}^kP_\beta(i_r)\right)\right| \nonumber\\
&\leq  \frac{1}{2N}\sum_{r=1}^k\sum_j \left|\frac{\left[1-P_\beta(j)^{-1}\,\mathsf{Mat}_{jj}\right]\mathsf{Mat}_{i_rj}}{P_\beta(j) P_\beta(i_r) }\right|+ \frac{1}{N} \sum_{1\leq r<s\leq k}  \left|\frac{\mathsf{Mat}_{i_ri_s} }{P_\beta(i_r)P_\beta(i_s)}\right|+O\left(N^{-3/2+c\delta}\right)\nonumber\\
&\leq  \frac{1}{2N}\sum_{r=1}^k\sum_j \left|\frac{\mathsf{Mat}_{i_rj}}{P_\beta(j) P_\beta(i_r) }\right|+ \frac{1}{N} \sum_{1\leq r<s\leq k}  \left|\frac{\mathsf{Mat}_{i_ri_s} }{P_\beta(i_r)P_\beta(i_s)}\right|+O\left(N^{-3/2+c\delta}\right),
\end{align}
where in the last line we used the fact that  $1-P_\beta(j)^{-1}\,\mathsf{Mat}_{jj} = q_j\leq 1$ (see the proof of Lemma \ref{lemma:append-restriction-to-T}). Now we upper bound the term  $|\mathsf{Mat}_{i_ri_s}|$. Let $e_i^T=(0\cdots,1,\cdots,0)$. Then applying Cauchy-Schwarz inequality, we have $|\mathsf{Mat}_{ij}| \leq \sqrt{e_i^T \mathsf{Mat}\, e_i}\sqrt{e_j^T \mathsf{Mat}\, e_j}\leq \sqrt{P_\beta(i)P_\beta(j)}$, where we have used the fact that $\mathsf{Mat}\leq K^{-1}$. Thus, $\frac{1}{P_{\beta}(i_r)P_\beta(i_s)} \mathsf{Mat}_{i_ri_s} \leq 1/\sqrt{P_\beta(i_r)P_\beta(i_s)}\leq 1/p_{\min}$. Then from Eq. \eqref{eq:reminder-lead}, we have
\begin{align*}
\left|\ln \sum_{P\in\mc{S}(E)}\mu_E(P)\,  P(i_1)\cdots P(i_k)-\ln  \left(\prod_{r=1}^kP_\beta(i_r)\right)\right|
&\leq\frac{k(d+k-1)}{2Np_{\min}}.
\end{align*}
Thus,
\begin{align*}
|R_E| &\leq \frac{k(d+k-1)}{2Np_{\min}}\sum_{i_1,\cdots,i_k=0}^{d-1} \left|\left(\rho_{i_1,\cdots, i_k} -\sum_{P\in\mc{S}(E)}\mu_E(P)\,P(i_1)\cdots P(i_k)\right)\right|\\
&\leq \frac{k(d+k-1)}{2Np_{\min}}\tnorm{\tr_{N-k}\left( \rho^{(N)}_E\right)-\sum_{P\in \mc{S}(E)} \mu_E(P)\,\eta_P^{\otimes k}}\\
&\leq \frac{k^2d(d+k-1)}{N^2p_{\min}},
\end{align*}
where we have used $\tnorm{\tr_{N-k}\left( \rho^{(N)}_E\right)-\sum_{P\in \mc{S}(E)} \mu_E(P)\,\eta_P^{\otimes k}}\leq 2kd/N$ from Eq. \eqref{eq:replace-append-bound}. Now
\begin{align}
\DD`*{\tr_{N-k}\left(\rho^{(N)}\right)}{\int\mu(\dbeta)\,\tau_\beta^{\otimes k}}&\leq \frac{k (d-2)}{2N}+ \frac{k^2d(d+k-1)}{N^2p_{\min}}+ \frac{(d-1)k(k-1)}{2(N-1)(N-k+1)}+ O\left(N^{-3/2+c\delta}\right).
\end{align}
The remainder term $|R_E|\leq \frac{k^2d(d+k-1)}{N^2p_{\min}}$ and the term $\DD`*{\tr_{N-k}\left( \rho^{(N)}_E\right)}{\sum_{P\in \mc{S}(E)} \mu_E(P)\,\eta_P^{\otimes k}}\leq \frac{(d-1)k(k-1)}{2(N-1)(N-k+1)}$ can be absorbed into the correction terms of order  $O\left(N^{-3/2+c\delta}\right)$. Note that the worst case scaling for $k$ and $N$ is of the form $(k^2/N)^{i/2}$ with $i>0$ (see also Remark \ref{rem:worst-scaling}). Combining everything, we have
\begin{align}
\DD`*{\tr_{N-k}\left(\rho^{(N)}\right)}{\int\mu(\dbeta)\,\tau_\beta^{\otimes k}}&\leq \frac{k (d-2)}{2N}+ O\left(N^{-3/2+c\delta}\right).
\end{align}
This concludes the proof of the theorem.
\end{proof}

\section{Quantifying Symmetry Breaking and Robust Thermalization}
\label{eq:Lindblad}
\subsection{Approximate symmetry}
 Here we show how the asymmetry of a state, quantifying the departure of the state from being invariant under all energy-preserving unitaries, controls the closeness of a state to a mixture of thermal states via a de Finetti approximation. We first comment on the time translation symmetry which is strictly weaker than the EPU invariance.

\

\medskip
\noindent
{\bf EPU invariance versus time-translation invariance.} A quantum state $\rho$ is said to be invariant under time translations if it commutes with the system Hamiltonian, i.e., $[\rho, H] = 0$. This corresponds to invariance under the one-parameter unitary group $\{ e^{-iHt} \}_{t \in \mathbb{R}}$, representing time evolution. In contrast, EPU invariance requires $\rho$  to be invariant under \emph{all} unitaries $U \in \mathcal{U}_H$ satisfying $[U, H] = 0$, i.e., under the full commutant of $H$. While time-translation invariance enforces block-diagonality in the energy eigenbasis, EPU invariance is strictly stronger. It demands $\rho$ to be invariant under arbitrary unitaries acting within each energy eigenspace, i.e., that $\rho$ is maximally mixed on each energy shell. This distinction is operationally significant in quantum thermodynamics and the resource theory of asymmetry. In the resource-theoretic framework, asymmetry relative to time-translation symmetry is identified with coherence between energy levels~\cite{Marvian2014, Lostaglio2015, Gour2008}, and EPU-invariant states represent a strict subset of time-translation-invariant states that lack not only coherence, but also any internal structure within energy eigenspaces that could be exploited by symmetric operations. In the present work, this stronger symmetry plays a central role in determining the thermal structure of subsystems via a de Finetti-type approximation.

\medskip
\noindent
\textbf{EPU Twirling.}  Let us define EPU twirling operation $\mathcal{T}_{\mathrm{EPU}}$ as 
\begin{align}
\mathcal{T}_{\mathrm{EPU}}(\rho)
:= \int_{\mc{U}_H} \mathrm{d}U \, U\rho U^\dagger,
\end{align}
where $\mathrm{d}U$ denotes the normalized Haar measure on $\mc{U}_H$ and $\rho\in \mc{D}_N$. Since the group factorizes across energy eigenspaces, for any state $\rho$ we obtain
\begin{align}
\mathcal{T}_{\mathrm{EPU}}(\rho)
= \bigoplus_E \mathrm{Tr}(\hat P_E \rho)\, \frac{\hat P_E}{g(E)},
\end{align}
where $g(E) = \mathrm{tr}(\hat P_E)$. This state is invariant under all $U\in\mc{U}_H$, and by Theorem~\ref{th:trace-de-finetti}, it admits an approximate decomposition as a convex mixture of thermal states as long as $p_{\min}(\mathcal{T}_{\mathrm{EPU}}(\rho))>0$.

\medskip
\noindent
{\bf Symmetry Breaking Measure.} Let us define the symmetry breaking distance of \(\rho\) from exact EPU invariance via quantum relative entropy as
\begin{align}
\Delta_{\mathrm{asym}}(\rho) := D\left(\rho ||\mathcal{T}_{\mathrm{EPU}}(\rho)\right).
\end{align}
The measure $\Delta_{\mathrm{asym}}(\rho)$ quantifies the degree to which $\rho$ deviates from EPU invariance. We remark that $\Delta_{\mathrm{asym}}$ is actually an asymmetry monotone (under EPU covariant operations) \cite{Gour2008}. By construction, $\mathcal{T}_{\mathrm{EPU}}(\rho)$ is EPU-invariant and thus admits the de Finetti-type approximation. We now show that (in Theorem \ref{th:theorem-next} below) the original state $\rho$, if close to EPU-invariant, inherits an approximate de Finetti representation with an error explicitly controlled  by $\Delta_{\mathrm{asym}}\left(\rho\right)$.

\begin{theorem}[Robust thermalization with symmetry breaking]
\label{th:theorem-next}
Let $\rho^{(N)}$ be a state of $N$-qudit system with Hamiltonian $H = \sum_{j=1}^N h^{(j)}$ satisfying  $p_{\min}\left(\rho^{(N)}\right)>0$. Then the $k$-qudit marginal of $\rho^{(N)}$ can be approximated for a fixed $k\ll \sqrt{N}$ and large $N$ as
\begin{align}
    \left\| \tr_{(N-k)}\left(\rho^{(N)}\right) - \int\mu(\dbeta)\,\tau_{\beta}^{\otimes k} \right\|_1 \leq \sqrt{2\Delta_{\mathrm{asym}}\left(\rho^{(N)}\right)}+\frac{k(8\kappa+d(k+3)-2k+6)}{2N}+  O\left(N^{-3/2+c\delta}\right),
\end{align}
where $c>0$ is some $d$-dependent constant,  $\delta\in\left(0,1/2c\right)$ and $\mu$ is a probability measure on the set of inverse temperatures $\beta$.
\end{theorem}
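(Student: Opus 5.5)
The plan is to use the triangle inequality, with the EPU-twirled state $\sigma^{(N)}:=\mathcal{T}_{\mathrm{EPU}}(\rho^{(N)})$ as the intermediate point:
\begin{align*}
\tnorm{\tr_{N-k}\rho^{(N)}-\int\mu(\dbeta)\,\tau_\beta^{\otimes k}}
&\le \tnorm{\tr_{N-k}\rho^{(N)}-\tr_{N-k}\sigma^{(N)}}\\
&\quad+\tnorm{\tr_{N-k}\sigma^{(N)}-\int\mu(\dbeta)\,\tau_\beta^{\otimes k}}.
\end{align*}
The measure is chosen as $\mu(\dbeta)=\sum_E c_E\,\delta(\beta-\beta_{E/N})\,\dbeta$ with $c_E=\tr(\hat P_E\rho^{(N)})$. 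These are exactly the weights of $\sigma^{(N)}=\sum_E c_E\,\rho_E^{(N)}$, since $\mathcal{T}_{\mathrm{EPU}}(\rho)=\bigoplus_E \tr(\hat P_E\rho)\,\hat P_E/g(E)$.

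\textbf{First term.} The partial trace is CPTP, so by contractivity of the trace norm this term is at most $\tnorm{\rho^{(N)}-\sigma^{(N)}}$. Pinsker's inequality then gives
\[
\tnorm{\rho^{(N)}-\sigma^{(N)}}\le\sqrt{2D(\rho^{(N)}\|\sigma^{(N)})}=\sqrt{2\Delta_{\mathrm{asym}}(\rho^{(N)})}.
\]
This is finite because $\mathrm{supp}\,\rho\subseteq\mathrm{supp}\,\mathcal{T}_{\mathrm{EPU}}(\rho)$: the blocks $\hat P_E\rho\hat P_E$ vanish exactly when $c_E=0$.

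\textbf{Second term.} $\sigma^{(N)}$ is EPU-invariant, so the argument of Theorem~\ref{th:trace-de-finetti} applies shell by shell. One has to check that $p_{\min}(\sigma^{(N)})>0$. The shells with $\hat P_E\sigma\hat P_E\neq0$ are exactly those with $c_E\neq0$, which in turn are exactly the shells with $\hat P_E\rho\hat P_E\neq0$. Hence $p_{\min}(\sigma^{(N)})=p_{\min}(\rho^{(N)})>0$, and Lemma~\ref{le:lemma1:appendix} applies to every occupied shell.

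Convexity reduces the second term to $\sum_E c_E\,\tnorm{\rho_E^{(k)}-\tau(E/N)^{\otimes k}}$. For each shell I split as in the proof of Theorem~\ref{th:trace-de-finetti}:
\begin{itemize}
\item the with/without-replacement bound Eq.~\eqref{eq:replace-append-bound} gives $2kd/N=4kd/(2N)$;
\item Lemma~\ref{le:lemma1:appendix} with general $\kappa$ gives $k(8\kappa+d(k-1)-2k+6)/(2N)+O(N^{-3/2+c\delta})$.
\end{itemize}
Adding these, $-kd+4kd=3kd$, so the total is $k(8\kappa+d(k+3)-2k+6)/(2N)+O(N^{-3/2+c\delta})$, which is the stated expression. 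The constants are uniform in $E$ because $p_{\min}$ is fixed, so averaging over $c_E$ preserves the bound.

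\textbf{Main obstacle.} There is no new analytic difficulty here. The only delicate points are the support condition that makes $\Delta_{\mathrm{asym}}$ finite and legitimizes Pinsker, and the transfer of $p_{\min}$ from $\rho$ to its twirl. Both follow from the block structure of $\mathcal{T}_{\mathrm{EPU}}$. I would also note the convention that $p_{\min}>0$ excludes the extremal shells $E=NE_0$ and $E=NE_{d-1}$, as in Remark~\ref{re:zero-energy}.
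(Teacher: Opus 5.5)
Your proposal is correct and follows essentially the same route as the paper: a triangle inequality through the EPU twirl, contractivity of the trace norm under partial trace together with Pinsker for the first term, and the machinery of Theorem~\ref{th:trace-de-finetti} (in the general-$\kappa$ form of Lemma~\ref{le:lemma1:appendix}) for the second. Your explicit checks of the support condition and of $p_{\min}(\mathcal{T}_{\mathrm{EPU}}(\rho^{(N)}))=p_{\min}(\rho^{(N)})$, and your arithmetic combining $2kd/N$ with the general-$\kappa$ bound, spell out details the paper leaves implicit.
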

\begin{proof}
The proof of the theorem follows by using triangle inequality, Pinsker's inequality \cite{CoverThomas2006} and Theorem~\ref{th:trace-de-finetti}.
\begin{align*}
&\tnorm{\tr_{(N-k)}\left(\rho^{(N)}\right) - \int\mu(\dbeta)\,\tau_{\beta}^{\otimes k}} \\
&\leq\tnorm{\tr_{(N-k)}\left(\rho^{(N)}\right)- \tr_{(N-k)}\left(\mc{T}_{\mathrm{EPU}}\left(\rho^{(N)}\right)\right)}+\tnorm{\tr_{(N-k)}\left(\mc{T}_{\mathrm{EPU}}\left(\rho^{(N)}\right)\right)- \int\mu(\dbeta)\,\tau_{\beta}^{\otimes k}}\\
&\leq\tnorm{\rho^{(N)}- \mc{T}_{\mathrm{EPU}}\left(\rho^{(N)}\right)}+\frac{k(8\kappa+d(k+3)-2k+6)}{2N}+  O\left(N^{-3/2+c\delta}\right)\\
&\leq\sqrt{2\Delta_{\mathrm{asym}}\left(\rho^{(N)}\right)}+ \frac{k(8\kappa+d(k+3)-2k+6)}{2N}+  O\left(N^{-3/2+c\delta}\right).
\end{align*}
\end{proof}

\subsection{EPU-invariance from Lindblad dynamics}

We consider a physically motivated class of open-system dynamics that converges to an \emph{exactly} EPU-invariant state, thereby enabling a direct application of our de Finetti theorem. Let a system of $N$ qudits have Hamiltonian $H=\sum_{j=1}^N h^{(j)}$, where $h^{(j)} = \sum_{x=0}^{d-1} E_x \ket{x}\bra{x}$. $\{\ket{x}\}_{x=0}^{d-1}$ is the local energy eigenbasis and the $E_x$ are distinct. The global energy eigenbasis is therefore $\{ \ket{\Bx} := \ket{x_1,\ldots,x_N} \}_{\Bx\in \{0,\ldots,d-1\}^N}$.
For each total energy $E$, define the projector $\hat P_E := \sum_{\Bx: \,\sum_i E_{x_i}=E} \ket{\Bx}\bra{\Bx}$ with $\mathrm{tr}(\hat P_E)=g(E)$. Let the initial state of the system be given by $\rho_0$.

\medskip
\noindent
{\bf The physical model.} The $N$ qudits interact with an environment through brief, sudden collisions occurring at random instants. A collision is of one of $n$ possible types; a collision of type $j\in\{1,\dots,n\}$ applies a unitary $V$ drawn from a fixed distribution $\mathsf{dist}_j$, whose average channel is $\Cc_j(\sigma):=\E_{\mathsf{dist}_j}[\,V\sigma V^\dagger\,]$. Collisions of type $j$ arrive as an independent Poisson process of rate $r_j$ and in disjoint time intervals the numbers of collisions are independent. The number of collisions in an interval of length
$t$ is Poisson-distributed with mean $r_j t$. We work in the interaction picture with respect to
$H$, so that between collisions the state does not change. A single noise realization over $[0,t]$ is
$\xi=(t_1<\dots<t_m;\,V_1,\dots,V_m)$. Along it the state is piecewise constant, jumping at each collision, $\rho_\xi(t)=V_m\cdots V_1\,\rho_0\,V_1^\dagger\cdots V_m^\dagger$. Each $\rho_\xi(t)$ is a density matrix, but it is random and discontinuous and has no smooth equation of motion. The object that does is the noise average $\bar\rho(t):=\E_\xi[\rho_\xi(t)]$.

\bigskip
\noindent
{\bf The $n$-process master equation.} Note two elementary facts about Poisson processes
\cite{Kingman1993}: (1) the union of the $n$ independent processes is a single Poisson process of total rate
$R=\sum_{j=1}^n r_j$; and (2) each collision of the combined process is, independently, of type $j$ with
probability $p_j=r_j/R$. Define the combined single-collision channel $\Cc$ as
\begin{align}
\label{eq:Ccomb}
\Cc:=\sum_{j=1}^n p_j\,\Cc_j=\frac1R\sum_{j=1}^n r_j\,\Cc_j.
\end{align}
Conditioning on the total number $m$ of collisions in $[0,t]$ and averaging over the drawn unitaries and over types, we get $\E_{\text{types},\{\mathsf{dist}_j\}}\!\left[V_m\cdots V_1\,\rho_0\,V_1^\dagger\cdots V_m^\dagger\right]
=\Cc^{\,m}(\rho_0)$, independently of the collision times. Only the
number $m$ of collisions survives; it is Poisson-distributed with mean $Rt$. Hence, for all $t\ge0$,
\begin{align}
\label{eq:exact}
\bar\rho(t)=\sum_{m=0}^{\infty}e^{-Rt}\frac{(Rt)^m}{m!}\,\Cc^{\,m}(\rho_0)
=e^{t\Lc}(\rho_0),
\end{align}
where $\Lc=\sum_{j=1}^{n}r_j\,(\Cc_j-\id)$. Eq. \eqref{eq:exact} exhibits $e^{t\Lc}$ as a Poisson-weighted convex mixture of the channels $\Cc^{\,m}$, so it is completely positive and trace preserving at every $t$, and $\Lc$ is a legitimate Lindblad generator.

Let $\Qe=\mathbb{I}-\Pe$, then for any state $\rho$, we have
\begin{align}
\label{eq:blocks}
\rho=\Pe\rho\Pe+\Pe\rho\Qe+\Qe\rho\Pe+\Qe\rho\Qe.
\end{align}
Consider the following two types of collisions. Both are block diagonal in the energy shells and hence are energy-preserving. 

\medskip
\noindent
{\it Type 1: randomization (Haar) collision on shell $E$, rate $\gamma_E$.}
The unitary is $V=V_E\oplus\mathbb{I}$, with $V_E$ Haar-random on shell $E$ and the identity on its
complement. Conjugating Eq. \eqref{eq:blocks} and averaging over $V_E$, the three kinds of block behave
differently: the Haar twirl depolarizes $\Pe\rho\Pe$ to the microcanonical state of the shell; the cross
blocks carry a single factor of $V_E$ and are annihilated by $\int dV_E\, V_E=0$; and $\Qe\rho\Qe$
is untouched. Hence the average channel is
\begin{align}
\label{eq:Vchannel}
\Vc_E(\rho)=\frac{\tr(\Pe\rho \Pe)}{\gE}\,\Pe+\Qe\rho\,\Qe.
\end{align}

\medskip
\noindent
{\it Type 2: phase collision on shell $E$, rate $\nu_E$.}
It applies a random overall phase to the shell $E$ and leaves everything else untouched, i.e., $U^{(E)}_\theta=e^{i\theta}\Pe+\Qe$, where $\theta$ is uniform on $[0,2\pi)$. Expanding
$U^{(E)}_\theta\rho\,U^{(E)\dagger}_\theta=\Pe\rho\Pe+\Qe\rho\Qe+e^{i\theta}\Pe\rho\Qe+e^{-i\theta}\Qe\rho\Pe$
and then averaging over $\theta$ gives
\begin{align}
\label{eq:Wchannel}
\Wc_E(\rho)=\int_0^{2\pi}\!\frac{d\theta}{2\pi}\,U^{(E)}_\theta\rho\,U^{(E)\dagger}_\theta
=\Pe\rho\Pe+\Qe\rho\Qe.
\end{align}
Relabeling $j\leftrightarrow(E,\text{kind})$, the full Lindbladian, for the two types of collisions, reads as
\begin{align}
\label{eq:relabel}
\Lc (\rho) &=\sum_{E}\gamma_E\left(\Vc_E(\rho)-\rho\right) + \sum_E\nu_E\left(\Wc_E(\rho)-\rho\right)\nonumber\\
&=\sum_E\gamma_E\left[\tfrac{\tr(\Pe\rho \Pe)}{\gE}\Pe-\Pe\rho\Pe\right]-\sum_E\left(\gamma_E+\nu_E\right)\left(\Pe\rho\Qe+\Qe\rho\Pe\right)\nonumber\\
&=\sum_E\gamma_E\left[\tfrac{\tr(\Pe\rho \Pe)}{\gE}\Pe-\Pe\rho\Pe\right]+\sum_E2\left(\gamma_E+\nu_E\right)\left[\Pe\rho\Pe-\frac{1}{2}\left\{\Pe,\rho\right\}\right].
\end{align}
Choosing $2(\gamma_E+\nu_E)=\lambda_E$ for every shell, we get $\Lc=\mathcal{L}_{\mathrm{block}} + \mathcal{L}_{\mathrm{deph}}$, where 
\begin{align}
&\mathcal{L}_{\mathrm{block}}(\rho)=\sum_E\gamma_E\!\left[\tfrac{\tr\left(\Pe\rho\Pe\right)}{\gE}\Pe-\Pe\rho\Pe\right],\\
&\mathcal{L}_{\mathrm{deph}}(\rho)=\sum_E\lambda_E\!\left[\Pe\rho\Pe-\tfrac12\{\Pe,\rho\}\right].
\end{align}
When $\lambda_E\ge2\gamma_E$, the phase rates $\nu_E=\tfrac12\lambda_E-\gamma_E$ are nonnegative, so this construction is realizable.

\begin{remark}[Interaction vs.\ Schr\"odinger picture]
\label{rem:picture}
Since the collisions are energy-preserving, the free evolutions collect into
$U_0(t)=e^{-iHt}$ and factor out of every realization, giving $\bar\rho^{\rm
Sch}(t)=U_0(t)\,\bar\rho(t)\,U_0(t)^\dagger$. Differentiating and using
$U_0\,\mathcal L(\rho)\,U_0^\dagger=\mathcal L(U_0\rho U_0^\dagger)$ (each term
of $\mathcal L$ is built from the $\hat P_E$, which commute with $H$) gives the
Schr\"odinger-picture master equation $\dot{\bar\rho}^{\rm Sch}=-i[H,\bar\rho^{\rm
Sch}]+\mathcal L(\bar\rho^{\rm Sch})$. The fixed point of $\mathcal L$ commutes
with $H$ and is left invariant by the free evolution, so the Hamiltonian does
not affect the long-time behavior and one may work in the interaction picture
without loss of generality.
\end{remark}

The term $\mathcal{L}_{\mathrm{block}}$ drives each block $\hat P_E \rho \hat P_E$ toward the maximally mixed state $\hat P_E/g(E)$, while $\mathcal{L}_{\mathrm{deph}}$ damps all coherences $\hat P_E \rho \hat P_{E'}$ for $E\neq E'$. Both diagonal and off-diagonal subspaces are invariant under $\mathcal{L}$, and one verifies that $[\mathcal{L}_{\mathrm{block}},\,\mathcal{L}_{\mathrm{deph}}]=0$, 
so the semigroup factorizes as
\begin{align}
e^{t\mathcal{L}} = e^{t\mathcal{L}_{\mathrm{block}}}\,e^{t\mathcal{L}_{\mathrm{deph}}}
= e^{t\mathcal{L}_{\mathrm{deph}}}\,e^{t\mathcal{L}_{\mathrm{block}}}.
\end{align}

\medskip
\noindent\textbf{Evolution of diagonal blocks.}
Under Lindbladian dynamics, we have $\rho(t) := e^{t\mathcal{L}}(\rho_0)$ or $\frac{d}{dt}\rho(t)=\mc{L}\rho(t)$. Let 
\begin{align} 
    \rho(t)=\sum_{E}\hat P_E\rho(t)\hat P_E +\sum_{E\ne E'} \hat P_E\rho(t)\hat P_{E'}.
\end{align}
Define $D_E(t):=\hat P_E\rho(t)\hat P_E$ and $O_{EE'}(t):=\hat P_E\rho(t)\hat P_{E'}$ for $E\ne E'$. Then from~\eqref{eq:Lblock-rig} and ~\eqref{eq:Ldeph-rig},
\begin{align*}
    \frac{d}{dt} D_{E}(t)&=\hat P_{E}\left[\mc{L}_{\mathrm{block}}(\rho(t)) + \mc{L}_{\mathrm{deph}}(\rho(t))\right]\hat P_{E}\\
    &=\hat P_{E}\sum_{E''} \gamma_{E''}
\left[
\frac{\mathrm{tr}(\hat P_{E''} \rho(t) \hat P_{E''})}{g(E'')}\, \hat P_{E''} - \hat P_{E''} \rho(t) \hat P_{E''}
\right] \hat P_{E}\\
&\quad+\hat P_{E} \sum_{E''} \lambda_{E''} 
\left[
\hat P_{E''} \rho(t) \hat P_{E''} - \tfrac{1}{2}\{\hat P_{E''},\rho(t)\}
\right]\hat P_{E}\\
&=\gamma_{E}
\left[
\mathrm{tr}(D_{E}(t))\, \frac{\hat P_{E}}{g(E)} - D_{E}(t)
\right].
\end{align*}
Note that $\frac{d}{dt}\tr\left(D_{E}(t)\right)=0$, thus $\tr\left(D_{E}(t)\right)=\tr\left(D_{E}(0)\right)$. Then taking $\Delta_{E}(t)=\left[
\mathrm{tr}(D_{E}(0))\, \frac{\hat P_{E}}{g(E)} - D_{E}(t)
\right]$, we have
\begin{align*}
    \frac{d}{dt} \Delta_{E}(t)&=-\gamma_{E}
\Delta_{E}(t).
\end{align*}
Thus, $\Delta_{E}(t)= \Delta_{E}(0)e^{-t\gamma_{E}}$ or
\begin{align}
\label{eq:BE-solution}
    D_{E}(t)&=\mathrm{tr}(D_{E}(0))\, \frac{\hat P_{E}}{g(E)} - \left(\mathrm{tr}(D_{E}(0))\, \frac{\hat P_{E}}{g(E)} - D_{E}(0)\right)e^{-t\gamma_{E}}\nonumber\\
    &=(1-e^{-t\gamma_{E}})\mathrm{tr}(D_{E}(0))\, \frac{\hat P_{E}}{g(E)} + e^{-t\gamma_{E}} D_{E}(0).
\end{align}

\medskip
\noindent\textbf{Evolution of off-diagonal blocks.}
We have
\begin{align*}
    \frac{d}{dt} O_{E,E'}(t) &= \hat P_E\left(\mc{L}_{\mathrm{block}}(\rho(t))+\mc{L}_{\mathrm{deph}}(\rho(t))\right)\hat P_{E'}\\
    &=
-\tfrac{1}{2}(\lambda_E + \lambda_{E'})\, O_{E,E'}(t)
\end{align*}
Then
\begin{align}
O_{E,E'}(t)
=e^{-\frac{1}{2}(\lambda_E+\lambda_{E'})t}\,O_{E,E'}(0),
\qquad E\neq E'.
\label{eq:CE-solution}
\end{align}

Combining~\eqref{eq:BE-solution} and~\eqref{eq:CE-solution}, the exact state at time $t$ is
\begin{align}
\rho(t)
&=
\sum_{E}\left[(1-\varepsilon_E(t))\,\hat P_E\rho_0\hat P_E+\varepsilon_E(t)\,\mathrm{tr}(\hat P_E\rho_0)\,\frac{\hat P_E}{g(E)}\right]
\notag\\
&\quad +
\sum_{E\neq E'} 
e^{-\frac{1}{2}(\lambda_E+\lambda_{E'})t}\,
\hat P_E\rho_0 \hat P_{E'},
\end{align}
where $\varepsilon_E(t)=1-e^{-t\gamma_E}$. As $t\to\infty$, $\varepsilon_E(t)\to1$ and $e^{-\frac{1}{2}(\lambda_E+\lambda_{E'})t} \to 0$, yielding the limiting state
\[
\lim_{t\to\infty}\rho(t)
=
\sum_E \mathrm{tr}(\hat P_E\rho_0)\,\frac{\hat P_E}{g(E)}
=\mathcal{T}_{\mathrm{EPU}}(\rho_0).
\]
Thus the dynamics converges exponentially to $\mathcal{T}_{\mathrm{EPU}}(\rho_0)$, the unique fixed point compatible with the conserved shell populations $\mathrm{tr}(\hat P_E\rho_0)$.

\medskip
\noindent\textbf{Finite-time convergence to the EPU twirl.} Let  $\Delta\rho_0 = \rho_0-\mathcal{T}_{\mathrm{EPU}}(\rho_0)$. Since $e^{t\mc{L}}\left(\mathcal{T}_{\mathrm{EPU}}(\rho_0)\right)= \mathcal{T}_{\mathrm{EPU}}(\rho_0)$, we have $e^{t\mc{L}}\left(\Delta\rho_0\right) = e^{t\mc{L}}(\rho_0)-\mathcal{T}_{\mathrm{EPU}}(\rho_0)$. Thus,
\begin{align}
   \tnorm{e^{t\mc{L}}(\rho_0)-\mathcal{T}_{\mathrm{EPU}}(\rho_0)} =\tnorm{e^{t\mc{L}}\left(\Delta\rho_0\right)}.
\end{align}
Now we bound the term  $\tnorm{e^{t\mc{L}}\left(\Delta\rho_0\right)}$. Note that $\hat{P}_E\rho_0\hat{P}_E-\operatorname{tr}(\hat{P}_E\rho_0)\hat{P}_E/g(E)=\hat{P}_E\Delta\rho_0\hat{P}_E$
and $\hat{P}_E\rho_0\hat{P}_{E'}=\hat{P}_E\Delta\rho_0\hat{P}_{E'}$ for $E\neq E'$. Then
\begin{align}
e^{t\mathcal{L}}(\Delta \rho_0)
=\sum_E e^{-t\gamma_E}\,\hat{P}_E\,\Delta\rho_0\,\hat{P}_E
+\sum_{E\neq E'}e^{-\frac{1}{2}(\lambda_E+\lambda_{E'})t}\,\hat{P}_E\,\Delta\rho_0\,\hat{P}_{E'}.
\label{eq:exact-deviation}
\end{align}
Defining $M:=\sum_E e^{-\lambda_E t/2}\,\hat{P}_E$, we get $M\,\Delta\rho_0\,M
=\sum_{E\neq E'}e^{-\frac{1}{2}(\lambda_E+\lambda_{E'})t}\hat{P}_E\Delta\rho_0\hat{P}_{E'}
+\sum_{E}e^{-\lambda_E t}\,\hat{P}_E\Delta\rho_0\hat{P}_E$. Then
\begin{align}
\tnorm{e^{t\mathcal{L}}(\Delta \rho_0)}\leq \tnorm{M\,\Delta\rho_0\,M}+\sum_E\big|e^{-\gamma_E t}-e^{-\lambda_E t}\big|\,\tnorm{\hat{P}_E\,\Delta\rho_0\,\hat{P}_E} .
\label{eq:sandwich-identity}
\end{align}
Using H\"older's inequality, we have $\lVert M\,\Delta\rho_0\,M\rVert_1\le\lVert M\rVert_\infty^2\,\lVert\Delta\rho_0\rVert_1
\le \max_E\left\{e^{-\lambda_E t}\right\}\,\lVert\Delta\rho_0\rVert_1$. Define $\Gamma:=\min_E\{\gamma_E,\lambda_E\}$, then $\max_E\left\{e^{-\lambda_E t}\right\}\leq e^{-\Gamma t}$. Further, $\left|e^{-\gamma_E t}-e^{-\lambda_E t}\right|\leq e^{-\Gamma t}$ as both  $e^{-\gamma_E t}$ and $e^{-\lambda_E t}$ lie in $[0,e^{-\Gamma t }]$. Using this, we have
\begin{align}
\sum_E \big|e^{-\gamma_E t}-e^{-\lambda_E t}\big|\,\tnorm{\hat{P}_E\Delta\rho_0\hat{P}_E}
&\le e^{-\Gamma t} \sum_E 
\tnorm{\hat{P}_E\Delta\rho_0\hat{P}_E}= e^{-\Gamma t}\tnorm{\sum_E  \hat{P}_E\Delta\rho_0\hat{P}_E}\leq e^{-\Gamma t}\tnorm{\Delta\rho_0},
\end{align}
where in the last line we used the fact that trace-preserving unital channels do not increase trace norm. Using $\tnorm{\Delta\rho_0}\le2$, we get
\begin{align}
\big\lVert e^{t\mathcal{L}}(\rho_0)-\mathcal{T}_{\rm EPU}(\rho_0)\big\rVert_1
\le 2\,e^{-\Gamma t}\,\lVert\Delta\rho_0\rVert_1\le 4\,e^{-\Gamma t}.
\label{eq:finite-time-bound}
\end{align}

If we want $\tnorm{e^{t\mc{L}}(\rho_0)-\mathcal{T}_{\mathrm{EPU}}(\rho_0)}\leq \epsilon$, then choosing $4 e^{-t\Gamma}\leq \epsilon$ suffices. Solving for $t$, we see that any state approaches the EPU-invariant state on a timescale
\(t=O\left(\Gamma^{-1}\log(1/\epsilon)\right)\). Using Theorem~\ref{th:trace-de-finetti} and the contractivity of the trace norm under partial trace,
for any \(k\ll \sqrt{N}\) we obtain
\begin{align*}
\tnorm{
\operatorname{tr}_{N-k}\!\left( e^{t\mc{L}}(\rho_0) \right)
-
\int\mu(\dbeta)\,\tau_{\beta}^{\otimes k}
}
&\le
\tnorm{ e^{t\mc{L}}(\rho_0) - \mathcal{T}_{\mathrm{EPU}}(\rho_0) }
+
\tnorm{
\operatorname{tr}_{N-k}\!\left( \mathcal{T}_{\mathrm{EPU}}(\rho_0) \right)
-
\int\mu(\dbeta)\,\tau_{\beta}^{\otimes k}
}
\\
&\le 4\,e^{-\Gamma t}+\frac{k(8\kappa+d(k+3)-2k+6)}{2N}+ O\left(N^{-3/2+c\delta}\right).
\end{align*}
From above, we see that any state approaches to the EPU-invariant state, and hence from Theorem~\ref{th:trace-de-finetti} to the thermal mixture,  on a timescale
\(t=O\left(\Gamma^{-1}\log(1/\epsilon)\right)\).

\end{document}